\newif\ifFOCS \FOCSfalse
\newcommand{\opt}{\mathsf{opt}}
\renewcommand{\c}{\mathbf{c}}
\newcommand{\E}{\mathcal{E}}
\newcommand{\Exp}{\mathbb{E}}
\newcommand{\Z}{\mathbb{Z}}
\newcommand{\Lin}{\mathrm{Lin}}
\newcommand{\FF}{\mathfrak{F}}
\newcommand{\C}{\mathbf{C}}
\newcommand{\A}{\mathbf{A}}
\newcommand{\B}{\mathbf{B}}
\renewcommand{\b}{\mathbf{b}}
\renewcommand{\O}{\mathcal{O}}
\renewcommand{\l}{\mathbf{l}}
\newcommand{\D}{\mathcal{D}}
\newcommand{\up}{\mathsf{up}}
\newcommand{\dn}{\mathsf{dn}}
\newcommand{\G}{\mathcal{G}}
\renewcommand{\P}{\mathbf{P}}
\newcommand{\R}{\mathbb{R}}
\renewcommand{\L}{\mathcal{L}}
\renewcommand{\d}{\mathbf{d}}
\newcommand{\one}{\mathbf{1}}
\newcommand{\simp}{\triangle}
\newcommand{\X}{\mathcal{X}}
\newcommand{\Y}{\mathcal{Y}}
\newcommand{\OT}{\Theta}
\newcommand{\V}{\mathcal{V}}
\newcommand{\x}{\mathbf{x}}
\newcommand{\FOCSspace}[1]{}
\newcommand{\FOCSspace}[1]{}
\newcommand{\grant}{Research supported by NSF Grant CCF-1410022}
\newtheorem{theorem}{Theorem}[section]
\newtheorem{corollary}[theorem]{Corollary}
\newtheorem{defn}[theorem]{Definition}
\newtheorem{lemma}[theorem]{Lemma}
\begin{document}
\ifFOCS
\IEEEoverridecommandlockouts
\fi
\ifFOCS
\title{Generalized Preconditioning and Network Flow Problems}
\author{\IEEEauthorblockN{Jonah Sherman\IEEEauthorrefmark{1}}
\IEEEauthorblockA{Computer Science Division\\
University of California at Berkeley\\
CA, 94720 USA\\
jsherman@cs.berkeley.edu
}\thanks{\IEEEauthorrefmark{1} \grant}}
\else
\title{Generalized Preconditioning and Network Flow Problems}
\author{Jonah Sherman\thanks{\grant}\\University of California, Berkeley}
\date{\today (preliminary draft)}
\fi
\maketitle
\ifFOCS
\else
\thispagestyle{empty}
\fi

\begin{abstract}
We consider approximation algorithms for the problem of finding $x$ of minimal norm $\|x\|$ satisfying a linear system $\A x = \b$, where the norm $\|\cdot \|$ is arbitrary and generally non-Euclidean.  We show a simple general technique for composing solvers, converting iterative solvers with residual error $\|\A x - \b\| \leq  t^{-\Omega(1)}$ into solvers with residual error $\exp(-\Omega(t))$, at the cost of an increase in $\|x\|$, by recursively invoking the solver on the residual problem $\tilde{\b} = \b - \A x$.  Convergence of the composed solvers depends strongly on a generalization of the classical condition number to general norms, reducing the task of designing algorithms for many such problems to that of designing a \emph{generalized preconditioner} for $\A$.  The new ideas significantly generalize those introduced by the author's earlier work on maximum flow, making them more widely applicable.  

As an application of the new technique, we present a nearly-linear time approximation algorithm for uncapacitated minimum-cost flow on undirected graphs.  Given an undirected graph with $m$ edges labelled with costs, and $n$ vertices labelled with demands, the algorithm takes $\epsilon^{-2}m^{1+o(1)}$-time and outputs a flow routing the demands with total cost at most $(1+\epsilon)$ times larger than minimal, along with a dual solution proving near-optimality.  The generalized preconditioner is obtained by embedding the cost metric into $\ell_1$, and then considering a simple hierarchical routing scheme in $\ell_1$ where demands initially supported on a dense lattice are pulled from a sparser lattice by randomly rounding unaligned coordinates to their aligned neighbors.  Analysis of the generalized condition number for the corresponding preconditioner follows that of the classical multigrid algorithm for lattice Laplacian systems.
\end{abstract}

\section{Introduction}
A fundamental problem in optimization theory is that of finding solutions $x$ of minimum-norm $\|x\|$ to rectangular linear systems $\A x = \b$.  Such solvers have extensive applications, due to the fact that many practical optimization problems reduce to minimum norm problems.  When the norm is Euclidean, classical iterative solvers such as steepest descent and conjugate gradient methods produce approximately optimal solutions with residual error $\|\A x - \b\|$ exponentially small in iteration count,
The rate of exponential convergence depends strongly on the \emph{condition number} of the square matrix  $\A \A^*$.

Therefore, the algorithm design process for such problems typically consists entirely of efficiently constructing \emph{preconditioners}: easily computable left-cancellable operators $\P$, for which the transformed problem $\P\A = \P \b$ is well-conditioned so iterative methods converge rapidly.  In a seminal work, Spielman and Teng\cite{ST} present a nearly-linear-time algorithm to construct preconditioners with condition number polylogarithmic in problem dimension for a large class of operators $\A$, including Laplacian systems and others arising from discretization of elliptic PDEs.

There are many fundamental optimization problems that can be expressed as minimum-norm problems with respect to more general, non-Euclidean norms, including some statistical inference problems and network flow problems on graphs.  In particular, the fundamental \emph{maximum-flow} and \emph{uncapacitated minimum-cost flow} problems reduce respectively to $\ell_\infty$ and $\ell_1$ minimum-norm problems in the undirected case.  

For such problems,  there is presently no known black-box iterative solver analagous to those for $\ell_2$ that converge exponentially with rate independent of problem dimension.  Therefore, one is forced to choose between interior point methods to obtain zero or exponentially small error\cite{Boyd}, at the cost of iteration count depending strongly on problem dimension, or alternative methods that have error only polynomially small with respect to iterations\cite{PST,FS,NesterovSmooth}.

An important idea implicitly used by classical iterative $\ell_2$ solvers is \emph{residual recursion}: reducing the error of an approximate solution $x$ by recursively applying the solver to $\tilde{\b} = \b - \A x$.  Such recursion is implicit in those solvers due to the fact that in $\ell_2$, the map $\b \mapsto \x_{\opt}$ is linear, with $\x_{\opt} = (\A\A^*)^{+} \b$.  Therefore, the classical methods are effectively recursing on \emph{every} iteration, with no distinction between iterations and recursive solves.  In general norms, an $\x_{\opt}$ may not be unique, and there is generally no linear operator mapping $\b$ to some $\x_{\opt}$.

In this paper, we make two primary contibutions.  Our first contribution is to extend analysis of residual recursion to arbitrary norms.  We present a general and modular framework for efficiently solving minimum-norm problems by composing simple well-known base solvers.  The main tool is the \emph{composition lemma}, describing the approximation parameters of a solver composed via residual recursion of two black-box solvers.  The resulting parameters depend on a generalization of the condition number to arbitrary norms; therefore, much like $\ell_2$, the algorithm design process is reduced to constructing good \emph{generalized preconditioners}.  Our second contribution is to present, as a practical application of those tools, a nearly-linear time approximation algorithm for the uncapacitated minimum-cost flow problem on undirected graphs.  Having established the former framework, the latter algorithm is entirely specified by the construction of a generalized preconditioner, and analyzed by bounding its generalized condition number.  

Our framework builds upon earlier work\cite{Sherman13}, where we leveraged ideas introduced by Spielman and Teng to obtain nearly-linear time approximation algorithms for undirected maximum flow.  We have since realized that some of the techniques applied to extend $\ell_2$-flows to $\ell_\infty$-flows are in no way specific to flows, 
and indeed the composition framework presented here is a generalization of those ideas to non-Euclidean minimum-norm problems.  Moreover, a significantly simpler max-flow algorithm may be recovered using black-box solvers and general composition as presented here, requiring only the truly flow-specific part of the earlier work: the \emph{congestion approximator}, now understood to be a specific instance of a more widely-applicable \emph{generalized preconditioner}.

We proceed to discuss those contributions separately in more detail, and outline the corresponding sections of the paper.  We also discuss some related work.

\subsection{Minimum Norm Problems}
In section \ref{approx}, we define the minimum norm problem more precisely, and introduce a useful bicriteria notion of approximation we call $(\alpha,\beta)$-solutions.  In that notion, $\alpha$ quantifies how much $\|x\|$ relatively exceeds $\|x_\opt\|$, and $\beta$ quantifies the residual error $\|\A x - \b\|$, with an appropriate relative scale factor.  We may succinctly describe some frequently used algorithms as $(\alpha,\beta)$-solvers; we discuss some examples in section \ref{lpsolve}, including steepest descent and conjugate gradient for $\ell_2$, and \emph{multiplicative weights} for $\ell_1$ and $\ell_\infty$.

In section \ref{reccomp}, we present the \emph{composition lemma}, which shows that by composing a $(\alpha_1,\beta_1)$-solver with a $(\alpha_2,\beta_2)$-solver in a black-box manner, we obtain a $(\alpha_3,\beta_3)$-solver with different parameter tradeoffs. 
The composed algorithm's parameters $(\alpha_3,\beta_3)$ depends crucially on a generalization of the classical condition number to rectangular matrices in general norms; Demko\cite{Demko} provides essentially the exact such generalization needed, so we briefly recall that definition before stating and proving the composition lemma.  
The proof of the composition lemma is not difficult; it follows quite easily once the right definitions of $(\alpha,\beta)$-solutions and a generalized condition number have been established.  Nevertheless, the modularity of the composition lemma proves to quite useful.  
By recursively composing a solver with itself $t$ times, we may decrease $\beta$ exponentially with $t$, at the cost of a single \emph{fixed} increase in $\alpha$. By composing solvers with different parameters, better parameters may be obtained than any single solver alone.  A particularly useful example uses a $(M,0)$-solver, where $M$ is uselessly-large on its own, to terminate a chain of $t$ solvers and obtain zero error.  As the final step in the chain, we get all of the benefits (zero error), and almost none of the costs, as it contributes  $M 2^{-t}$ instead of $M$ to the final solver.

By composing the existing solvers discussed in section \ref{lpsolve}, we obtain solvers for $\ell_1$ and $\ell_\infty$ problems with residual error exponentially small in $t$.  We state the final composed algorithms parameters.  As a result, the algorithm design problem for such problems is reduced to that of designing a \emph{generalized preconditioner} for $\A$.  We briefly discuss such preconditioning in section \ref{precondsec}.
\subsection{Minimum Cost Flow}
The second part of this paper applies generalized preconditioning to solve a fundamental problem in network optimization.
In the \emph{uncapacitated minimum-cost flow problem} on undirected graphs, we are given a connected graph $G$ with $m$ edges, each annotated with a cost, and a specified demand at each vertex.  The problem is to find an edge flow with vertex divergences equal to the specified demands, of minimal total cost.  Our main result is a randomized algorithm that, given such a problem, takes $\epsilon^{-2}m^{1+o(1)}$ time and outputs a flow meeting the demands with cost at most $(1+\epsilon)$-times that of optimal.  We define various flow-related terms and give a more precise statement of the problem and our result in section \ref{graphsec}.  

We describe and analyze the algorithm in section \ref{mincostsec}.  Having built up our general tools in earlier sections, our task is reduced to the design and analysis of a generalized preconditioner for min-cost flow.  We begin by interpreting the edge costs as lengths inducing a metric on the graph, and then apply Bourgain's small-distortion embedding\cite{Bourgain85} into $\ell_1$.  That is, by paying a small distortion factor in our final condition number, we may focus entirely on designing a preconditioner for min-cost flow in $\ell_1$ space.  The construction of our precondtitioner is based on an extremely simple hierarchical routing scheme, inspired by a combination of the multigrid algorithm for grid Laplacians, and the Barnes-Hut algorithm for $n$-body simulation\cite{BH}.  The routing scheme proceeds on a sequence of increasingly dense lattices $V_0,V_1,\ldots$ in $\ell_1$, where $V_t$ are the lattice points with spacing $2^{-t}$, and the input demands are specified on the densest level $V_T$.  
Starting at level $t = T$, the routing scheme sequentially eliminates the demand supported on $V_t \setminus V_{t-1}$ by pulling the demand from level $t-1$ using a random-rounding based routing:
each vertex $x \in V_t$ picks a random nearest neighbor in $V_{t-1}$, corresponding to randomly rounding its unaligned coordinates, and then pulls a fractional part of its demand from that neighbor via any shortest path.  Afterwards, all demands on $V_t \setminus V_{t-1}$ are met, leaving a reduced problem with demands supported on $V_{t-1}$.  The scheme then recurses on the reduced problem.
At level $0$, the demands are supported on the corners of a hypercube, for which the simple routing scheme of pulling all demand from a uniformly random corner performs well-enough.  Having described the routing scheme, we need not actually carry it out.  Instead, our preconditioner crudely estimates the cost of that routing.
Furthermore, while the routing has been described on an infinite lattice, we'll observe that if the demands are only initially supported on a small set of $n$ vertices, the support remains small throughout.

\subsection{Related Work}
Spielman and Teng present a nearly-linear time algorithm for preconditioning and solving symmetric diagonally dominant matrices, including those of graph Laplacians\cite{ST}.  The ideas introduced in that work have led to breakthroughs in various network cut and flow algorithms for the case of undirected graphs. Christiano \emph{et. al.} apply the solver as a black-box to approximately solve the maximum flow problem in $\OT(m^{4/3})$ time.  Madry\cite{Madry10} opens the box and uses the ideas directly to give a family of algorithms for approximating cut-problems, including a $m^{o(1)}$-approximation in $m^{1+o(1)}$ time.  Both the present author\cite{Sherman13} and Kelner \emph{et. al.}\cite{KLOS} combine madry's ideas with $\ell_\infty$ optimization methods to obtain $(1-\epsilon)$-approxmations to maximum flow in $m^{1+o(1)}\epsilon^{-\O(1)}$ time.  While those two algorithms use similar ideas, the actual path followed to achieve the result is rather different.  Kelner \emph{et. al.} construct an \emph{oblivious routing scheme} that is $n^{o(1)}$-competetive, and then show how to use it to obtain a flow.  The algorithm maintains a demand-respecting flow at all times, and minimizes a potential function measuring edge congestion.  In contrast, our earlier algorithm short-cuts the need to explicitly construct the routing scheme by using Madry's construction directly, maintaining a flow that is neither demand nor capacity respecting, aiming to minimize a certain potential function that measures both the congestion and the demand error.  

For min-cost flow, earlier work considers the more general directed, capacitated case, with integer capacities in $\{1,\ldots,U\}$.   The $\OT(nm \log \log U)$-time double-scaling algorithm of Ahuja, Goldberg, Orlin, and Tarjan\cite{AGOT} remained the fastest algorithm for solving min-cost flow for 25 years between its publication and the Laplacian breakthrough.  Shortly after that breakthrough,  Daitch and Spielman\cite{DaitchSpielman} showed how to use the Laplacian solver with interior-point methods to obtain an $\OT(m^{3/2}\log^2 U)$-time algorithm.  Lee and Sidford further reduce this to $\OT(m\sqrt{n}\log^{\O(1)}U)$ by a general improvement in interior point methods\cite{LeeSidford}.
We are not aware of prior work 
\section{Approximate Solutions}\label{approx}
Let $\X,\Y$ be finite dimensional vector spaces, where $\X$ is also a Banach space, and let $\A \in \Lin(\X,\Y)$ be fixed throughout this section.  We consider the problem of finding a minimal norm pre-image of a specified $\b$ in the image of $\A$; that is, finding $x \in \X$ with $\A x = \b$ and $\|x\|_\X$ minimal.


Let $\x_{\opt}$ be an exact solution, with $\A \x_{\opt} = \b$, and $\|\x_{\opt}\|_\X$ minimal.  There are multiple notions of approximate solutions for this problem.  The most immediate is $x \in \X$ with $\A x = \b$ and 
\begin{equation} \frac{\|x\|}{\|\x_{\opt}\|} \leq \alpha \label{alphaonly} \end{equation}
.  We call such $x$ an $(\alpha,0)$-solution; we shall be interested in finding $(1+\epsilon, 0)$-solutions for small $\epsilon$.
A weaker notion of approximation is obtained by further relaxing $\A x = \b$ to $\A x \approxeq \b$.  Quantifying that requires more structure on $\Y$, so let us further assume $\Y$ to also be a Banach space.  In that case, we say $x$ is an $(\alpha,\beta)$-solution if equation \ref{alphaonly} holds and
\begin{equation}
\frac{\|\A x - \b\|}{\|\A\|\|\x_{\opt}\|} \leq \beta  \label{alphabeta} 
 \end{equation} 

The practical utility of such solutions depends on the application.  However, the weaker notion has the distinct advantage of being approachable by a larger family of algorithms, such as penalty and dual methods, by avoiding equality constraints.  We discuss such existing algorithms in section \ref{lpsolve}, but mention that they typically yield $(1,\epsilon)$-solutions after some number of iterations.  For $\ell_2$, the iteration dependency on $\epsilon$ is $\O(\log (1/\epsilon))$, while for some more general norms it is $\epsilon^{-\O(1)}$.

\section{Recursive Composition and Generalized Condition Numbers}\label{reccomp}
We now consider how to trade an increase in $\alpha$ for a decrease in $\beta$.  Residual recursion suggests a natural strategy: after finding an $(\alpha,\beta)$-solution, recurse on $\tilde{\b} = \b - \A x$.  More precisely, we define the \emph{composition} of two algorithms as follows.

\begin{defn}
Let $F_i$ be an $(\alpha_i,\beta_i)$-algorithm for $\A$, for $i \in \{1,2\}$.  The \emph{composition} $F_2 \circ F_1$ takes input $\b$, and first runs $F_1$ on $\b$ to obtain $x$.  Next, setting $\tilde{\b} = \b - \A x$, it runs $F_2$ on input $\tilde{\b}$ to obtain $\tilde{x}$.  Finally, it outputs $x + \tilde{x}$.
\end{defn}

Success of composition depends on $\|\tilde{\b}\|_\Y$ being small implying $\tilde{\b}$ has a small-norm pre-image.  The extent to which that is true is quantified by the \emph{condition number} of $\A$.  The condition number in $\ell_2$ may be defined several ways, resulting in the same quantity.  When generalized to arbitrary norms, those definitions differ.
We recall two natural definitions, following Demko\cite{Demko}.
\begin{defn} The \emph{non-linear condition number} of $\A : \X \to \Y$ is
\[ \tilde{\kappa}_{\X \to \Y}(\A) = \min \left\{ \frac{\|\A\|_{\X \to \Y}\|x\|_\X}{\|\A x\|_{\Y}} : \A x \neq 0 \right\}\]
The \emph{linear condition number} is defined by
\[ \kappa_{\X \to \Y}(\A) = \min \left\{\|\A\|_{\X \to \Y}\|\mathbf{G}\|_{\Y \to \X} : \mathbf{G} \in \Lin(\Y,\X) : \A\mathbf{G}\A = \A \right\} \]
\end{defn}

Of course, $\tilde{\kappa} \leq \kappa$.
Having defined the condition number, we may now state how composition affects the approximation parameters.
\begin{theorem}[Composition]\label{comp} Let $F_i$ be an $(\alpha_i,\beta_i/\tilde{\kappa})$-algorithm for $\A: \X \to \Y$, where $\A$ has non-linear condition number $\tilde{\kappa}$  Then, the composition $F_2 \circ F_1$ is an $(\alpha_1 + \alpha_2 \beta_1, \beta_1\beta_2/\tilde{\kappa}$-algorithm for the same problem.
\end{theorem}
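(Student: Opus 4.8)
The plan is to read the non-linear condition number $\tilde{\kappa}$ as exactly the quantity that converts "small residual in $\Y$" into "small-norm correction available in $\X$", and then to chain the guarantees of $F_1$ and $F_2$ essentially mechanically. Throughout I assume the non-degenerate case $\A \neq 0$, $\b \neq 0$ (so $\x_{\opt}\neq 0$); the degenerate cases are handled by the convention that a solution is the zero vector.

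First I would isolate the one structural fact about $\tilde{\kappa}$ that is needed: for every $\b'$ in the image of $\A$, its minimal-norm preimage $\x'_{\opt}$ satisfies $\|\x'_{\opt}\|_\X \le (\tilde{\kappa}/\|\A\|)\,\|\b'\|_\Y$. This is immediate from the definition: if $\b'=0$ take $\x'_{\opt}=0$; otherwise $\A\x'_{\opt}=\b'\neq 0$, so $\|\A\|\,\|\x'_{\opt}\|/\|\A\x'_{\opt}\| \ge \tilde{\kappa}$, and since $\|\A\x'_{\opt}\|=\|\b'\|$ this rearranges to the claimed bound. (This is the content Demko attributes to the surjectivity modulus, and it is the only place the definition of $\tilde{\kappa}$ enters.)

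Next I would run $F_1$ on $\b$. Being an $(\alpha_1,\beta_1/\tilde{\kappa})$-algorithm, it produces $x$ with $\|x\| \le \alpha_1\|\x_{\opt}\|$ and, setting $\tilde{\b}=\b-\A x$, with $\|\tilde{\b}\| \le (\beta_1/\tilde{\kappa})\,\|\A\|\,\|\x_{\opt}\|$. Since $\tilde{\b} = \A(\x_{\opt}-x)$ lies in the image of $\A$, the fact from the previous paragraph applies to its minimal-norm preimage $\tilde{\x}_{\opt}$, giving $\|\tilde{\x}_{\opt}\| \le (\tilde{\kappa}/\|\A\|)\,\|\tilde{\b}\| \le \beta_1\|\x_{\opt}\|$ — note the two factors $\tilde{\kappa}$ and the two factors $\|\A\|$ cancel exactly, which is precisely why the $(\alpha,\beta)$-solution notion and the condition number were normalized as they were. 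Then I would run $F_2$ on $\tilde{\b}$: as an $(\alpha_2,\beta_2/\tilde{\kappa})$-algorithm it returns $\tilde{x}$ with $\|\tilde{x}\| \le \alpha_2\|\tilde{\x}_{\opt}\| \le \alpha_2\beta_1\|\x_{\opt}\|$ and $\|\A\tilde{x}-\tilde{\b}\| \le (\beta_2/\tilde{\kappa})\,\|\A\|\,\|\tilde{\x}_{\opt}\| \le (\beta_1\beta_2/\tilde{\kappa})\,\|\A\|\,\|\x_{\opt}\|$. Finally I would assemble $x+\tilde{x}$: the triangle inequality gives $\|x+\tilde{x}\| \le (\alpha_1+\alpha_2\beta_1)\|\x_{\opt}\|$, and the identity $\A(x+\tilde{x})-\b = \A\tilde{x}-(\b-\A x) = \A\tilde{x}-\tilde{\b}$ transfers the residual bound of $F_2$ verbatim, yielding residual at most $(\beta_1\beta_2/\tilde{\kappa})\,\|\A\|\,\|\x_{\opt}\|$. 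Hence $F_2\circ F_1$ is an $(\alpha_1+\alpha_2\beta_1,\ \beta_1\beta_2/\tilde{\kappa})$-algorithm, as claimed.

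There is no genuinely hard step here — the entire content sits in having chosen the right definitions, as the excerpt itself remarks. If I had to name the step requiring the most care, it is the middle one: verifying that the residual guarantee of $F_1$, which is stated relative to $\|\A\|\,\|\x_{\opt}\|$, converts into a bound on $\|\tilde{\x}_{\opt}\|$ stated relative to $\|\x_{\opt}\|$, and that this conversion costs a factor of exactly $\tilde{\kappa}$ and no more. Keeping the placement of $\|\A\|$ and $\tilde{\kappa}$ mutually consistent across the definition of the condition number, the definition of an $(\alpha,\beta)$-solution, and the statement of the lemma is the only bookkeeping one must not get wrong.
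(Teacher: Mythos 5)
Your proof is correct and follows essentially the same route as the paper's own: use the condition number to convert the residual bound from $F_1$ into a bound $\|\tilde{\x}_\opt\| \leq \beta_1\|\x_\opt\|$, apply the guarantees of $F_2$ to $\tilde{\b}$, and finish with the triangle inequality together with the identity $\A(x+\tilde{x})-\b = \A\tilde{x}-\tilde{\b}$. The only caveat is in your parenthetical ``immediate from the definition'' step: the literal inequality $\|\A\|\,\|\x'_{\opt}\|/\|\A \x'_{\opt}\| \geq \tilde{\kappa}$ obtained from the displayed min-definition rearranges to a \emph{lower} bound on $\|\x'_{\opt}\|$, not the upper bound you need, so you are in fact relying (as the paper itself implicitly does) on reading $\tilde{\kappa}$ as the supremum over $\b'$ in the image of $\|\A\|$ times the minimal-norm preimage of $\b'$ divided by $\|\b'\|$.
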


Before proving lemma \ref{comp}, we state two useful corollaries.  The first concerns the result of recursively composing a $(\alpha,\beta/\tilde{\kappa})$-algorithm with itself.
\begin{corollary} Let $F$ be a $(\alpha,\beta/\tilde{\kappa})$-algorithm for $\beta < 1$.  Let $F^t$ be the sequence formed by iterated composition, with $F^1 = F$, $F^{t+1} = F^t \circ F$.  Then, $F^t$ is a $(\alpha/(1-\beta),\beta^t/\tilde{\kappa})$-algorithm.
\end{corollary}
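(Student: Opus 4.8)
The plan is to prove this by induction on $t$, applying the Composition theorem (Theorem \ref{comp}) at each step, and then summing the resulting geometric-type series for the $\alpha$-parameter. The $\beta$-parameter is the easy part: if $F^t$ has $\beta$-parameter $\beta^t/\tilde\kappa$, then composing with $F$ (which has $\beta$-parameter $\beta/\tilde\kappa$) yields, by Theorem \ref{comp}, a $\beta$-parameter of $(\beta^t)(\beta)/\tilde\kappa = \beta^{t+1}/\tilde\kappa$, exactly as claimed. So the real content is tracking the $\alpha$-parameter.

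First I would set up the induction. Write $\alpha_t$ for the $\alpha$-parameter of $F^t$, so $\alpha_1 = \alpha$. The recursion $F^{t+1} = F^t \circ F$ means we compose $F_1 = F$ (parameters $\alpha$, $\beta/\tilde\kappa$) with $F_2 = F^t$ (parameters $\alpha_t$, $\beta^t/\tilde\kappa$). By Theorem \ref{comp}, the composed $\alpha$-parameter is $\alpha_1 + \alpha_2\beta_1 = \alpha + \alpha_t \beta$. So $\alpha_{t+1} = \alpha + \beta\alpha_t$. Unrolling this linear recurrence gives $\alpha_t = \alpha(1 + \beta + \beta^2 + \cdots + \beta^{t-1}) = \alpha\frac{1-\beta^t}{1-\beta} \leq \frac{\alpha}{1-\beta}$, using $\beta < 1$. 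That bound $\frac{\alpha}{1-\beta}$ is exactly the claimed $\alpha$-parameter, and since larger $\alpha$ is a weaker guarantee, an $(\alpha_t, \cdot)$-algorithm is also an $(\alpha/(1-\beta), \cdot)$-algorithm. This completes the induction.

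One subtlety I would want to check carefully: Theorem \ref{comp} is stated for composing two algorithms where \emph{both} have $\beta$-parameters of the form $\beta_i/\tilde\kappa$, and produces $\beta_1\beta_2/\tilde\kappa$ — note the output has a single factor of $1/\tilde\kappa$, not $1/\tilde\kappa^2$, so the normalization is consistent and the induction hypothesis ($F^t$ has $\beta$-parameter $\beta^t/\tilde\kappa$) is preserved in the right form to feed back into the theorem. The only mild obstacle is bookkeeping: making sure the roles of $F_1$ (inner, run first on $\b$) and $F_2$ (outer, run on the residual) are assigned correctly in the recursion $F^{t+1} = F^t \circ F$, since the $\alpha$-formula $\alpha_1 + \alpha_2\beta_1$ is not symmetric in the two arguments. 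With $F_1 = F$ and $F_2 = F^t$ this gives the clean recurrence $\alpha_{t+1} = \alpha + \beta\alpha_t$ above; the alternative assignment would give $\alpha_t + \beta^t\alpha$, whose partial sums behave differently, so it is worth stating explicitly which convention is used. Everything else is a routine geometric series estimate.
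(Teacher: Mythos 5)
Your proof is correct and follows essentially the same route as the paper: induction on $t$, applying the composition theorem with $F$ as the inner solver and $F^t$ as the outer one, so the $\beta$-parameters multiply and the $\alpha$-parameters satisfy $\alpha_{t+1} = \alpha + \beta\alpha_t$. The paper simply carries the bound $\alpha/(1-\beta)$ through the induction directly (checking $\alpha + \beta\cdot\alpha/(1-\beta) = \alpha/(1-\beta)$) rather than unrolling the exact geometric partial sum, which is a cosmetic difference only.
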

\begin{proof} By induction on $t$.  To start, an  $(\alpha,\beta/\tilde{\kappa})$-algorithm is trivially a $(\alpha/(1-\beta),\beta/\tilde{\kappa})$-algorithm.  Assuming the claim holds for $F^t$, lemma \ref{comp} implies $F^{t+1}$ is a $(\alpha +\alpha \beta/(1-\beta),\beta^{t+1}/\tilde{\kappa})$-algorithm.  \end{proof}

We observe that to obtain $(1+\O(\epsilon),\delta)$-solution, only the first solver in the chain need be very accurate.
\begin{corollary} Let $F$ be a $(1+\epsilon,\epsilon/2\tilde{\kappa}$-solver and $G$ be a $(2,1/2\tilde{\kappa})$-solver.  Then, $G^t \circ F$ is a $(1+5\epsilon, \epsilon 2^{-t-1}/\tilde{\kappa}$-solver.
\end{corollary}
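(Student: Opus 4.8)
The plan is to read this off directly from the two preceding results, with $F$ playing the role of the single accurate solver at the head of the chain and $G^t$ playing the role of a cheap, nearly-zero-error tail. First I would apply the self-composition corollary to $G$: since $G$ is a $(2,\beta/\tilde{\kappa})$-solver with $\beta = 1/2 < 1$, that corollary gives that $G^t$ is a $\bigl(2/(1-\tfrac12),\ (\tfrac12)^t/\tilde{\kappa}\bigr) = (4,\ 2^{-t}/\tilde{\kappa})$-solver.

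Next I would invoke the composition lemma (Theorem \ref{comp}) with $F_1 = F$ and $F_2 = G^t$. Reading off parameters, $F$ is an $(\alpha_1,\beta_1/\tilde{\kappa})$-solver with $\alpha_1 = 1+\epsilon$ and $\beta_1 = \epsilon/2$, while $G^t$ is an $(\alpha_2,\beta_2/\tilde{\kappa})$-solver with $\alpha_2 = 4$ and $\beta_2 = 2^{-t}$. Theorem \ref{comp} then yields that $G^t \circ F$ is an $\bigl(\alpha_1 + \alpha_2\beta_1,\ \beta_1\beta_2/\tilde{\kappa}\bigr)$-solver, i.e. a $\bigl(1 + \epsilon + 4\cdot\tfrac{\epsilon}{2},\ \tfrac{\epsilon}{2}\cdot 2^{-t}/\tilde{\kappa}\bigr) = (1 + 3\epsilon,\ \epsilon\,2^{-t-1}/\tilde{\kappa})$-solver. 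Finally, since any $(\alpha,\beta)$-solver is trivially an $(\alpha',\beta)$-solver for $\alpha' \ge \alpha$, and $1+3\epsilon \le 1+5\epsilon$, the claimed bound follows; the slack from $3\epsilon$ to $5\epsilon$ is deliberately loose and simply leaves room for constants in later applications.

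There is essentially no hard step here — all the content lives in Theorem \ref{comp} and the self-composition corollary, both already established. The only points needing a modicum of care are (a) matching up which solver is $F_1$ versus $F_2$ in the composition: the accurate solver $F$ must be applied \emph{first}, so that its comparatively large residual $\epsilon/2$ gets multiplied into the tail's cost parameter $\alpha_2 = 4$ exactly once rather than compounding across the chain; and (b) checking that $F$, $G$, and hence $G^t$ and $G^t \circ F$ are all solvers for the \emph{same} operator $\A$, so that a single non-linear condition number $\tilde{\kappa}$ governs every invocation — which is precisely the standing assumption of both prior statements, so no extra work is required.
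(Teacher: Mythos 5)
Your derivation is correct and is exactly the intended route: the paper gives no separate proof for this corollary, which is meant to follow by applying the self-composition corollary to $G$ (giving a $(4,\,2^{-t}/\tilde{\kappa})$-solver) and then Theorem \ref{comp} with $F$ run first. Your bookkeeping even yields the sharper constant $1+3\epsilon \le 1+5\epsilon$, and your two cautionary points (ordering of $F_1,F_2$ and a common $\A$, hence common $\tilde{\kappa}$) are the only things that need checking.
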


For some problems, there is a very simple $(M,0)$-solver known.  A final composition with that solver serves to elimate the error.

\begin{corollary} Let $F$ be a $(1+\epsilon, \epsilon\delta/\tilde{\kappa})$-solver and $G$ be a $(M,0)$-solver.  Then $G \circ F$ is a $(1+\epsilon(1+\delta M), 0)$-solver.
\end{corollary}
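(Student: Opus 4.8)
The plan is to apply the composition theorem (Theorem~\ref{comp}) directly, with essentially no work beyond bookkeeping of the parameters. Write $F = F_1$ and $G = F_2$ in the notation of that theorem. Since $F$ is a $(1+\epsilon,\epsilon\delta/\tilde{\kappa})$-solver, we read off $\alpha_1 = 1+\epsilon$ and $\beta_1/\tilde{\kappa} = \epsilon\delta/\tilde{\kappa}$, i.e.\ $\beta_1 = \epsilon\delta$. Since $G$ is a $(M,0)$-solver, we have $\alpha_2 = M$ and $\beta_2 = 0$. These clearly satisfy the hypotheses of Theorem~\ref{comp} (both are $(\alpha_i,\beta_i/\tilde{\kappa})$-algorithms for the same $\A$ with non-linear condition number $\tilde{\kappa}$).

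Next I would simply substitute into the conclusion of Theorem~\ref{comp}: the composition $G \circ F = F_2 \circ F_1$ is an $(\alpha_1 + \alpha_2\beta_1,\ \beta_1\beta_2/\tilde{\kappa})$-algorithm. The second coordinate is $\beta_1\beta_2/\tilde{\kappa} = \epsilon\delta \cdot 0 /\tilde{\kappa} = 0$, so the composed solver has zero residual error, as claimed. The first coordinate is $\alpha_1 + \alpha_2\beta_1 = (1+\epsilon) + M\cdot\epsilon\delta = 1 + \epsilon(1 + \delta M)$, matching the stated bound exactly.

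There is no real obstacle here: the corollary is a one-line specialization of Theorem~\ref{comp}, and the only thing to be careful about is the convention that a ``$(\alpha,\beta/\tilde{\kappa})$-solver'' has its $\beta$-parameter pre-divided by $\tilde{\kappa}$, so that the $\beta_i$ fed into the theorem are $\epsilon\delta$ and $0$ rather than $\epsilon\delta/\tilde{\kappa}$ and $0$. Once that is tracked correctly, the arithmetic is immediate. (One may additionally remark, as motivation rather than as part of the proof, that this shows the large parameter $M$ of the terminating $(M,0)$-solver enters the final guarantee only multiplied by the tiny factor $\epsilon\delta$, so choosing $\delta$ small makes the error-elimination step essentially free in $\alpha$ while buying exact feasibility.)
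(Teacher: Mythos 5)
Your proposal is correct and is exactly the intended argument: the corollary is an immediate specialization of Theorem~\ref{comp} with $\alpha_1 = 1+\epsilon$, $\beta_1 = \epsilon\delta$, $\alpha_2 = M$, $\beta_2 = 0$, and your bookkeeping of the $\beta/\tilde{\kappa}$ convention and the resulting arithmetic $(1+\epsilon) + M\epsilon\delta = 1+\epsilon(1+\delta M)$, with zero residual error, matches the paper, which states this corollary without further proof for precisely this reason.
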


\subsection{Proof of lemma \ref{comp}}
Since $F_1$ is an $(\alpha_1,\beta_1/\tilde{\kappa})$-algorithm, we have
\begin{eqnarray*}
\|x\| & \leq & \alpha_1 \|\x_\opt\| \\
\frac{\|\tilde{\b}\|}{\|\A\|} & \leq & \frac{\beta_1}{\tilde{\kappa}} \|\x_\opt\| 
\end{eqnarray*}
By the definition of $\tilde{\kappa}$,
\[ \|\tilde{\x}_\opt\| \leq \tilde{\kappa} \frac{\|\tilde{\b}\|}{\|\A\|} \leq \beta_1 \|\x_{\opt}\| \]
Since $F_2$ is an $(\alpha_2,\beta_2/\tilde{\kappa})$-algorithm,
\begin{eqnarray*}
\|\tilde{x}\| & \leq & \alpha_2 \|\tilde{\x}_\opt\| \leq \alpha_2 \beta_1 \|\x\|_\opt \\
\frac{\|\A \tilde{x} - \tilde{\b}\|}{\|\A\|} & \leq & \frac{\beta_2}{\tilde{\kappa}} \|\tilde{\x}_\opt\| \leq \frac{\beta_1\beta_2}{\tilde{\kappa}}
\end{eqnarray*}
The conclusion follows from $\|x+\tilde{x}\| \leq \|x\|+\|\tilde{x}\|$.

\section{Solvers for $\ell_p$} \label{lpsolve}

The recursive composition technique takes existing $(\alpha,\beta)$-approximation algorithms and yields new algorithms with different approximation parameters.  In this section, we discuss the parameters of existing well-known base solvers.

Let $\A: \R^m \to \R^n$ be linear and fixed throughout this section.
To concisely describe results and ease comparison, for a norm $\|\cdot\|_\X$ on $\R^m$ and a norm $\|\cdot\|_\Y$ in $\R^n$, we write $(\alpha,\beta)_{\X \to \Y}$ to denote an $(\alpha,\beta)$ solution with respect to $\A: \X \to \Y$.

The classical $\ell_2$ solvers provide a useful goalpost for comparison.
\begin{theorem}[\cite{Boyd}]
\item The \emph{steepest-descent} algorithm produces a $(1,\delta)_{2 \to 2}$-solution after $\O(\kappa_{2 \to 2}(\A)^2 \log(1/\delta))$ simple iterations
\item The \emph{conjugate gradient} algorithm produces a $(1,\delta)_{2 \to 2}$-solution after $\O(\kappa_{2 \to 2}(\A) \log(1/\delta))$ simple iterations.
\end{theorem}

There are many existing algorithms for $p$-norm minimization, and composing them yields algorithms with exponentially small error.  For this initial manuscript, we simply state the $\ell_1$ version required for min-cost flow.  The general cases  consist of describing the many existing algorithms\cite{NesterovSmooth} in terms of $(\alpha,\beta)$-solvers.

\begin{theorem}\label{l1} There is a $(1+\epsilon,\delta)_{1 \to 1}$-solver with simple iterations totalling
\[ \O\left(\tilde{\kappa}_{1 \to 1}(\A)^2 \log(m) \left(\epsilon^{-2}) + \log(\delta^{-1})\right)\right) \]
\end{theorem}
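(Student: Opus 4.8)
The plan is to derive Theorem~\ref{l1} from a single accurate base solver via the composition corollaries of Section~\ref{reccomp}; the real content is then the base solver, which I would build from a multiplicative-weights (entropic mirror-descent) iteration on a penalized objective. Concretely, I would first prove the sub-claim: \emph{for every $\epsilon\in(0,1)$ there is a $(1+\epsilon,\ \epsilon/(2\tilde{\kappa}))_{1\to1}$-solver using $\O(\tilde{\kappa}_{1\to1}(\A)^2\log(m)\,\epsilon^{-2})$ simple iterations}, where $\tilde{\kappa}=\tilde{\kappa}_{1\to1}(\A)$. Granting this, let $F$ be that solver with parameter $\epsilon$, and let $G$ be the same solver with parameter $1$, which is a $(2,\ 1/(2\tilde{\kappa}))_{1\to1}$-solver costing $\O(\tilde{\kappa}^2\log m)$ iterations. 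The corollary on chaining then gives that $G^{t}\circ F$ is a $(1+5\epsilon,\ \epsilon\,2^{-t-1}/\tilde{\kappa})_{1\to1}$-solver; since $\epsilon/\tilde{\kappa}\le1$ (condition numbers are $\ge 1$), choosing $t=\O(\log(1/\delta))$ drives the residual parameter below $\delta$, and after replacing $\epsilon$ by $\epsilon/5$ the total cost is $\O(\tilde{\kappa}^2\log m\,\epsilon^{-2})$ for $F$ plus $t$ runs of $G$ at $\O(\tilde{\kappa}^2\log m)$ each, i.e. $\O(\tilde{\kappa}^2\log m(\epsilon^{-2}+\log\delta^{-1}))$, as claimed.

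For the base solver, normalize so that $\|\A\|_{1\to1}=\max_j\|\A\e_j\|_1=1$ and $\|\x_\opt\|_1=1$ (the second normalization is the one issue I return to below), set $\lambda=2\tilde{\kappa}$, and consider
\[ f(x)\;=\;\|x\|_1\;+\;\lambda\,\|\A x-\b\|_1 . \]
The non-linear condition number makes this penalty essentially exact: writing $r=\|\A x-\b\|_1$, the residual $\A x-\b$ has a preimage of $\ell_1$-norm at most $\tilde{\kappa}r$, so $1=\|\x_\opt\|_1\le\|x\|_1+\tilde{\kappa}r$; combined with $f(x)\le \min f+\xi\le f(\x_\opt)+\xi=1+\xi$ this forces $\tilde{\kappa}r\le\xi$ and $\|x\|_1\le1+\xi$, i.e. \emph{any} $x$ with $f(x)\le1+\xi$ is a $(1+\xi,\ \xi/\tilde{\kappa})$-solution. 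Taking $\xi=\epsilon/2$ then gives the sub-claim, provided we can minimize $f$ to additive accuracy $\epsilon/2$ in the stated number of iterations.

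To do that I would run entropic mirror descent — equivalently a multiplicative-weights update maintained on $|x_1|,\dots,|x_m|$ together with their signs, lifted to the mass-$2$ simplex in $\R^{2m+1}$ via $x=x^{+}-x^{-}$ — to minimize $f$ over $\{\|x\|_1\le 2\}\ni\x_\opt$. Two estimates finish it. First, every subgradient of $f$ has the form $s_0+\lambda\A^{\top}s$ with $\|s_0\|_\infty\le1$ and $\|s\|_\infty\le1$, hence $\ell_\infty$-norm (the norm dual to the $\ell_1$-geometry on $\X$) at most $1+\lambda\|\A\|_{1\to1}=\O(\tilde{\kappa})$. Second, the entropic mirror map on that simplex has Bregman radius $\O(\log m)$. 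The standard mirror-descent guarantee then yields $f(\bar x_N)-\min f=\O(\tilde{\kappa}\sqrt{\log(m)/N})$, so $N=\O(\tilde{\kappa}^{2}\log(m)\,\epsilon^{-2})$ iterations suffice, each being one multiplication by $\A$ and by $\A^{\top}$ plus $\O(m)$ scalar exponentiations and a renormalization, which is ``simple'' in the required sense.

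The step I expect to be the genuine obstacle is dispensing with the normalization $\|\x_\opt\|_1=1$, which was used to pin down the mirror-descent radius and step length. Applying the non-linear condition number to $\b$ itself gives $\|\b\|_1/\|\A\|_{1\to1}\le\|\x_\opt\|_1\le\tilde{\kappa}\,\|\b\|_1/\|\A\|_{1\to1}$, so the optimal value is known up to a factor $\tilde{\kappa}$; a doubling search over the $\O(\log\tilde{\kappa})$ resulting guesses $g$ identifies it within a factor $2$, the cost of the run for guess $g$ scales like $g/\|\x_\opt\|_1$, and the resulting geometric series is dominated by the first successful guess $g=\Theta(\|\x_\opt\|_1)$, leaving the total count unchanged. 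Making the stopping test robust to this constant-factor slack — detecting that $g$ is too small because then $\min_{\|x\|_1\le g}\|\A x-\b\|_1\ge(\|\x_\opt\|_1-g)/\tilde{\kappa}$ stays bounded away from $0$ — is the only truly fiddly point, and is governed by the same condition-number inequality. Equivalently, the base solver is the Plotkin--Shmoys--Tardos / Arora--Kale width-based multiplicative-weights method for ``express $\b$ as a minimum total-weight conic combination of the signed columns $\pm\A\e_j$'', whose width is $\Theta(\tilde{\kappa}_{1\to1}(\A))$ after normalization; from that angle the theorem is just a translation of a known algorithm into the $(\alpha,\beta)$-language, as the surrounding text anticipates.
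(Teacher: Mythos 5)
Your overall architecture is the paper's: a weak multiplicative-weights--type base solver whose width is controlled by $\tilde{\kappa}_{1\to 1}(\A)$, boosted to exponentially small residual by the composition corollaries of Section~\ref{reccomp}. The paper realizes the base solver as multiplicative weights on the saddle-point problem $\max_{x\in \B_1}\min_{y^*\in \B_\infty} y^*\cdot(\A x-\mu\b)$, with the scale $\mu$ found by binary search, whereas you minimize the exact-penalty objective $\|x\|_1+2\tilde{\kappa}\|\A x-\b\|_1$ by entropic mirror descent; these are essentially equivalent, and your exact-penalty lemma and width bound are fine.

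The genuine gap is exactly the step you flagged: the unknown scale $\|\x_{\opt}\|_1$. Your geometric-series argument presupposes that the run for guess $g$ costs an amount growing with $g/\|\x_{\opt}\|_1$, but that is not what the entropic mirror-descent bound gives: since you cannot target the unknown accuracy $\xi\|\x_{\opt}\|_1$, each guess $g$ must be run to accuracy relative to $g$ itself (either $\xi g$ for acceptance or $\Theta(g)$ for a rejection test), and at that accuracy the iteration count is $\Theta(\tilde{\kappa}^2\log m\,\xi^{-2})$ (resp.\ $\Theta(\tilde{\kappa}^2\log m)$) \emph{independently of} $g$, because the mass of the simplex and the accuracy target scale together and cancel. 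Hence the doubling search over the factor-$\tilde{\kappa}$ range $[\|\b\|_1/\|\A\|,\ \tilde{\kappa}\|\b\|_1/\|\A\|]$ multiplies the count by $\Theta(\log\tilde{\kappa})$ rather than contributing a dominated series; worse, every one of the $t=\Theta(\log\delta^{-1})$ recursive $G$-calls is made on a fresh residual $\tilde{\b}$ whose optimum is again known only within a factor $\tilde{\kappa}$, so the search must be repeated at every level and the factor lands on the $\log\delta^{-1}$ term. As written, your argument therefore yields $\O\bigl(\tilde{\kappa}^2\log m\,(\epsilon^{-2}+\log\tilde{\kappa}\cdot\log\delta^{-1})\bigr)$, not the stated bound. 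This is precisely the issue the paper acknowledges ("the search adds a factor $\O(\log\tilde{\kappa})$") and removes not by amortizing the search but by replacing the nonsmooth $\ell_1$/$\ell_\infty$ objectives with scale-free $\ell_p$-regularized ones, $p=(\log m)/\epsilon$, for which the step sizes and convergence do not depend on knowing $\|\x_{\opt}\|$; your proof needs that device (or some other scale-free base solver) to reach the claimed iteration count.
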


\begin{theorem}\label{linf} There is a $(1+\epsilon,\delta)_{\infty \to \infty}$-solver with simple iterations totalling
\[ \O\left(\tilde{\kappa}_{\infty \to \infty}(\A)^2 \log(n) \left(\epsilon^{-2} + \log(\delta^{-1})\right)\right) \]
\end{theorem}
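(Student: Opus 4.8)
The plan is to derive Theorem~\ref{linf} by feeding a single, well-known base solver for the $\ell_\infty\to\ell_\infty$ minimum-norm problem into the composition machinery of Section~\ref{reccomp}.  The base solver is ordinary \emph{multiplicative weights} (equivalently, entropy-regularized mirror descent, or the Plotkin--Shmoys--Tardos framework); once its guarantees are phrased as an $(\alpha,\beta)$-solver, everything else is bookkeeping with the corollaries of the composition lemma.

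\emph{The base solver.}  Fix $\A:\R^m\to\R^n$ and write $\tilde\kappa=\tilde\kappa_{\infty\to\infty}(\A)$.  For a trial scale $r>0$, I would run multiplicative weights on the feasibility problem ``find $x$ with $\|x\|_\infty\le r$ and $\A x=\b$'': the $2n$ inequalities $\pm(\langle\A_i,x\rangle-b_i)\le 0$ are the hard constraints carried by a distribution over $2n$ experts, while the box $\{\|x\|_\infty\le r\}$ is the easy domain, whose best-response oracle against a signed weight vector $q$ is the vertex $-r\,\sgn(\A^*q)$.  Each iteration is then \emph{simple}: one application of $\A^*$ to form the potential gradient, one coordinatewise sign operation, and one application of $\A$ to update the weights; the output $\bar x$ is the average of the oracle vertices, so automatically $\|\bar x\|_\infty\le r$.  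When $r\ge\|\x_\opt\|_\infty$ the constraints are satisfiable (witnessed by $\x_\opt$, so the weak oracle never fails), and a one-line estimate bounds their width over the domain by $W=\max_i\sup_{\|x\|_\infty\le r}|\langle\A_i,x\rangle-b_i|\le 2r\,\|\A\|_{\infty\to\infty}$, using $\|\A_i\|_1\le\|\A\|_{\infty\to\infty}$ and $|b_i|=|\langle\A_i,\x_\opt\rangle|\le\|\A\|_{\infty\to\infty}\|\x_\opt\|_\infty$.  The standard Hedge regret bound then gives, after $k$ rounds, $\|\A\bar x-\b\|_\infty\le\O\!\bigl(W\sqrt{\log n/k}\bigr)$, hence $\|\A\bar x-\b\|_\infty/(\|\A\|_{\infty\to\infty}\|\x_\opt\|_\infty)=\O(\sqrt{\log n/k})$.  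Choosing $r$ within a factor $2$ of $\|\x_\opt\|_\infty$ by doubling --- an additive $\O(\log(\cdot))$ overhead dominated below --- this is a $(2,\beta/\tilde\kappa)$-solver using $\O(\tilde\kappa^2\log n/\beta^2)$ iterations for any target $\beta<1$; pinning $r$ to multiplicative precision $1+\epsilon$ by the usual trick of making the scale an extra optimization variable (so that no binary search, hence no spurious $\epsilon^{-1}$ factor, is incurred) instead gives a $(1+\epsilon,\beta/\tilde\kappa)$-solver in the same asymptotic count.

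\emph{Assembling the composition.}  I would then use the base solver twice: as an accurate solver $F$ with parameters $(1+\epsilon,\ \epsilon/2\tilde\kappa)$, costing $\O(\tilde\kappa^2\log n\,\epsilon^{-2})$ iterations, and as a crude solver $G$ with parameters $(2,\ 1/2\tilde\kappa)$, costing $\O(\tilde\kappa^2\log n)$ iterations.  By the second corollary following Theorem~\ref{comp}, $G^t\circ F$ is a $(1+5\epsilon,\ \epsilon 2^{-t-1}/\tilde\kappa)$-solver.  Taking $t=\max\{0,\lceil\log_2(\epsilon/(2\delta\tilde\kappa))\rceil\}$, which is $\O(\log(1/\delta))$ since $\epsilon\le 1$ and $\tilde\kappa\ge 1$, drives the residual parameter below $\delta$, and the total iteration count is
\[\O(\tilde\kappa^2\log n\,\epsilon^{-2})\;+\;t\cdot\O(\tilde\kappa^2\log n)\;=\;\O\!\bigl(\tilde\kappa^2\log n\,(\epsilon^{-2}+\log(1/\delta))\bigr).\]
A final rescaling $\epsilon\mapsto\epsilon/5$ absorbs the constant and does not affect the bound, proving the theorem.

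\emph{The main obstacle.}  The composition step is automatic from Section~\ref{reccomp}; I expect the only real work to be in the base solver --- getting the width bound $W\le 2\|\A\|_{\infty\to\infty}\|\x_\opt\|_\infty$ right, verifying that plain multiplicative weights really converges at the $\sqrt{\log n/k}$ rate with exactly this width (so that the exponent on $\tilde\kappa$ is $2$ and not larger), and eliminating the unknown value $\|\x_\opt\|_\infty$ without paying an extra $\epsilon^{-1}$ factor.  Theorem~\ref{l1} is obtained by the analogous multiplicative-weights scheme run on the dual side, where the $m$ columns of $\A$ take the role the $n$ rows take here, so $\log n$ becomes $\log m$.
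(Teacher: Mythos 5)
Your proposal matches the paper's own (very terse) proof: the paper likewise obtains the base solver by running multiplicative weights on the saddle-point/feasibility formulation $\max_{y^*\in B_1}\min_{x\in B_\infty} y^*\cdot(\A x-\mu\b)$ with width giving $\O(\tilde\kappa^2\log n/\beta^2)$ iterations, and then boosts the residual error to $\delta$ via the corollaries of the composition lemma, exactly as you do with $G^t\circ F$. The only difference is a detail of handling the unknown scale $\|\x_\opt\|$ (the paper binary-searches over $\mu$, noting an $\O(\log\tilde\kappa)$ overhead avoidable by soft-max/$\ell_p$ regularization, while you use doubling or an extra scale variable), which does not change the argument.
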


The preceding theorems follow by combining the \emph{multiplicative weights algorithm}\cite{PST,FS} with the composition lemma.  The former algorithm applies in a more general online setting; for our applications, the algorithm yields the ``weak'' approximation algorithms that shall be composed.

A \emph{support oracle} for a compact-convex set $\C$ takes input $y$ and returns some $x \in \C$ maximizing $x \cdot \C$.
Let $\simp_n$ be the unit simplex in $\R^n$ (i.e., the convex hull of the standard basis).

The multiplicative weights method finds approximate solutions to the saddle point problem,
\[ \max_{w \in \simp_n} \min_{z \in \C} w \cdot z \]
In particular, it obtains an additive $\epsilon$-approximation in $\O(\rho^2 \epsilon^{-2}\log n)$ iterations with each iteration taking $\O(n)$ time plus a call to a support oracle for $\C$.

Let $\B_p$ be the unit ball in $\ell_p$.  A point $w \in \B_1$ is represented as $w = w_+ - w_-$ for $(w_+,w_-) \in \simp_{2n}$.
The base solvers follow by considering the saddle-point problems,
\begin{eqnarray*}
\max_{y^* \in \b_1} \min_{x \in \b_\infty}& y^* \cdot (\A x - \mu \b) \\
\max_{x \in \b_1} \min_{y^* \in \b_\infty} & y^* \cdot (\A x - \mu \b)
\end{eqnarray*}
where $\mu$ is scaled via binary-search as needed.  The search adds a factor $\O(\log \tilde{\kappa})$ to the complexity.
This factor may be avoided by using $\ell_p$-norm for $p =\log n$ and $p = (\log m)/\epsilon$ regularization, respectively, instead.
%
\section{Generalized Preconditioning}\label{precondsec}
Let $\A : \X \to \Y$ be linear.   The minimum-norm problem for $\A$ does not intrinsically require $\Y$ to be normed; nor does the definition of a $(\alpha,0)$-solution.  Thus, an algorithm designer seeking to solve a class of minimum norm problems may freely choose how to norm $\Y$.  The best choice requires balancing two factors.  First, the norm should be sufficiently ``simple'' that $(1,\epsilon)$ solutions for polynomially-small $\epsilon$ are easy to find.  Second, the norm should be chosen so that $\tilde{\kappa}(\A)_{\X \to \Y}$ is small.  If only the latter constraint existed, the ideal choice would be the \emph{optimal $\A$ pre-image norm},\footnote{We remark this is only defined on the image space of $\A$, rather than the entire codomain.  However, we only consider $\b$ in that image space}
\[ \|\b\|_{\opt(\A)} = \min \{ \|x\|_\X : \A x = \b \} \]
By construction, $\tilde{\kappa}(\A)_{\X \to \opt(\A)} = 1$.
Of course, simply evaluating that norm is equivalent to the original problem we aim to solve.
It follows that $\Y$ should be equipped with the relatively closest norm to $\opt(\A)$ that can be efficiently minimized.

Following the common approach in $\ell_2$, when $\X = \ell_p$, we propose to choose a \emph{generalized pre-conditioner} $\P : \Y \to \ell_p$ injective and consider the norm $\|\P \b\|_p$.  Again, $\P$ should be chosen such that $\tilde{\kappa}(\P\A)_{p \to p}$ is small, and $\P,\P^*$ are easy to compute.
\section{Graph Problems}\label{graphsec}
In this section we discuss some applications of generalized preconditioning to network flow problems.
Let $\G = (V,E)$ be an undirected graph with $n$ vertices and $m$ edges.  While undirected, we assume the edges are oriented arbitrarily.  We denote by $\V^*,\E^*$ the spaces of real-valued functions $f : V \to \R$ and $f: E \to \R$ on vertices and edges, respectively. 
We denote by $\V,\E$ the corresponding dual spaces of \emph{demands} and \emph{flows}.  For $S \subseteq V$, we write $\one_S \in \V^*$ for the indicator function on $S$.

The \emph{discrete derivative operator} $\D^* : \V^*\to\E^*$ is defined by $(\D^* f)(xy) = f(y) - f(x)$ for each oriented edge $xy$. 
The constant function has zero derivative, $\D^* \one_V = 0$.
The (negative) adjoint $-\D: \E \to \V$ is the \emph{discrete divergence operator}; for a flow $\jmath \in \E$ and vertex $x \in V$, $-(\D \jmath)(x)$ is the net quantity being transported away from vertex $x$ by the flow.

A \emph{single-commodity flow problem} is specified by a \emph{demand vector} $\b \in \V$, and requires finding a flow $\jmath \in \E$ satisfying $\D \jmath = \b$, minimizing some cost function.  When the cost-function is a norm on $\E$, the problem is a minimum-norm pre-image problem.  Assuming $G$ is connected, $\b$ has a pre-image iff the total demand $\one_V \cdot \b$ is zero.  We hereafter assume $G$ to be connected and total demands equal to zero.

Having established the generalized preconditioning framework, we may now design fast algorithms for fundamental network flow problems by designing generalized preconditioners for the corresponding minimum-norm problems.  As a result, we obtain nearly-linear time algorithms for max-flow and uncapacitated min-cost flow in undirected graphs.  The max-flow algorithm was presented previously\cite{Sherman13}, and involved a combination of problem-specific definitions (e.g. \emph{congestion approximators}) and subroutines (e.g. gradient-based $\ell_\infty$ minimization).  The brief summary in subsection \ref{maxflow} shows how the same result may be immediately obtained by combining theorem \ref{l1linf} with only small part of the earlier flow-specific work\cite{Sherman13}: the actual construction of the preconditioner.

In section \ref{mincostsec}, we present a nearly-linear time algorithm for uncapacitated min-cost flows, another fundamental network optimization problem.  Once again, all that is needed is description and analysis of the preconditioner.
\subsection{Max-Flow}\label{maxflow}
A \emph{capacitated graph} associates with each edge $e$, a \emph{capacity} $\c(e) > 0$.  
Given capacities, we define the \emph{capacity} norm on $\E^*$ by $\|f\|_{\c*} = \sum_e \c(e)|f(e)|$; 
the dual \emph{congestion norm} on $\E$ is defined by $ \|\jmath \|_{\c} = \max_e \frac{|\jmath(e)|}{\c(e)}$.

The \emph{boundary capacity} of a set $S \subseteq \V$ is $\|\D^* \one_S\|_{c}$, which we abbreviate as $\c(\partial S)$.

The single-commodity flow problem with capacity norm is \emph{undirected maximum flow problem}, and is a fundamental problem in network design and optimization.
Following the work of Spielman and Teng for $\ell_2$-flows\cite{ST}, and Madry\cite{Madry10} for cut problems, the author has obtained a nearly-linear time algorithm for maximum flow\cite{Sherman13}.  We have since realized that only certain parts of the ideas introduced in that paper are actually specific to maximum-flow, and the present work has been obtained through generalizing the other parts.  The (previously presented as) flow-specific composition and optimization parts of that paper are entirely subsumed by the earlier sections of this paper.  The truly flow-specific part of that work required to complete the algorithm is the \emph{generalized preconditioner} construction, which we have previously called a \emph{congestion-approximator}.  We briefly describe the main ideas of the construction, and some simple illustrative examples.  For the full details, we refer the reader to the third section of \cite{Sherman13}.

Let $\FF \subseteq 2^\V$ be a family of subsets.  Such a family, together with a capacity norm on $\E$, induces a \emph{cut-congestion} semi-norm on $\V$ via $\|b\|_{\c(\FF)} = \max_{S \in \FF} \frac{|\one_S \cdot b|}{\c(\partial S)}$.  That is, for each set in the family,  consider the ratio of the total aggregate demand in that set to the total capacity of all edges entering that set.  If the indicator functions of sets in $\FF$ span $\V$, then it is a norm.  Note also that we may write $\|b\|_{c(\FF)} = \|\P b\|_\infty$, where $\P : \R^n \to \R^{\FF}$ is defined by $(\P b)_S = \frac{\one_S \cdot b}{\c(\partial S)}$.

If $\FF$ is taken to be the full power set $\FF = 2^\V$, the \emph{max-flow, min-cut theorem} is equivalent to the statement that the non-linear condition number $\tilde{\kappa}_{\c \to \c(\FF)}(\D^*)$ is exactly one.  If $\FF$ is taken to be the family of singleton sets, the non-linear condition number is the inverse of the \emph{combinatorial conductance} of $\G$.  That is, for high-conductance graphs, composition yields a fast algorithm requiring only a simple \emph{diagonal} preconditioner.  This is closely analagous to the situation for $\ell_2$, where diagonal preconditioning yields fast algorithms for graphs of large \emph{algebraic conductance}.

A particularly illustrative example consists of a unit-capacity $2^t \times 2^t$ 2-D grid.  With low $(\Theta(2^{-t}) = \Theta(1/\sqrt{n})$ conductance, the singleton family $\FF$ performs poorly.  A simple but dramatically better family consists of taking $\FF$ to be all power-of-two size, power-of-two aligned subgrids.  The latter family yields \emph{linear} condition number $\O(t) = \O(\log n)$.  Furthermore, computing the aggregate demand in all such sets requires only $\O(n)$ time, with each sub-grid aggregate equal to the sum of its four child aggregates.  This algorithm is analagous to an $\ell_\infty$ version of \emph{multi-grid}\cite{multi}.  Note that the indicator ``step'' functions $\one_S$ are different from the bilinear ``pyramid'' functions used by multigrid.

\begin{theorem}[\cite{Sherman13}] There is an algorithm that given a capacitated graph $(G, \c)$ with $n$ vertices and $m$ edges, takes $m^{1+o(1)}$ time and outputs a data structure of size $n^{1+o(1)}$ that efficiently represents a preconditioner $\P$ with $\kappa(\P \D)_{\c \to \infty} \leq n^{o(1)}$.  Given the data structure,
$\P$ and $\P^*$ can be applied in $n^{1+o(1)}$ time.
\end{theorem}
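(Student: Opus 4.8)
The plan is to reduce the theorem to a single combinatorial object — a \emph{cut-based hierarchical decomposition} of $G$ — and then to read $\P$ off it. First I would build a laminar family $\FF\subseteq 2^\V$ whose members (``clusters'') form a rooted tree $T$ with leaf set $V$ and root $V$, and alongside each non-root cluster $S$ I would keep a ``routing certificate'' $R_S$: a flow in $G$ of value one crossing $\partial S$, supported near $\partial S$, that tells how flow entering $S$ is distributed to $S$'s children. The preconditioner is then exactly the one from subsection~\ref{maxflow}: $\P:\V\to\R^{\FF}$ with $(\P b)_S=\frac{\one_S\cdot b}{\c(\partial S)}$, so $\|\P b\|_\infty=\|b\|_{\c(\FF)}$. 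The stored data structure is just $T$ together with the scalars $\c(\partial S)$ — the certificates enter only the analysis, not the evaluation of $\P$ — so it has $|\FF|\le n^{1+o(1)}$ entries; a call to $\P$ is a single bottom-up pass accumulating $\one_S\cdot b$ over each subtree and dividing, and $\P^*$ is the matching top-down pass, each in $n^{1+o(1)}$ time.

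Correctness amounts to $\kappa(\P\D)_{\c\to\infty}\le n^{o(1)}$, which (since $\P$ is injective on $\Im(\D)$, so $\|\P b\|_\infty$ and $\|b\|_{\opt(\D)}$ are comparable exactly when the two bounds below hold) I would prove in two halves. The easy half is $\|\P\D\|_{\c\to\infty}\le 1$: if $\|\jmath\|_\c\le 1$ then $|\jmath(e)|\le\c(e)$ for every $e$, so the net amount $\jmath$ pushes across $\partial S$ satisfies $|\one_S\cdot\D\jmath|\le\sum_{e\in\partial S}\c(e)=\c(\partial S)$, i.e.\ $|(\P\D\jmath)_S|\le 1$ for all $S$. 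The hard half is: for every $b$ with $\|b\|_{\c(\FF)}\le 1$, exhibit a flow routing $b$ of congestion $n^{o(1)}$, produced by a \emph{linear} map $b\mapsto\jmath$. Here I route $b$ ``up the tree'': the flow forced across the tree edge above $S$ is $\one_S\cdot b$, of magnitude $\le\c(\partial S)$ by hypothesis; I realize that tree-edge flow inside $G$ by scaling the certificate $R_S$, so $\jmath=\sum_S(\one_S\cdot b)R_S$, whose divergence telescopes down the tree to $b$, and whose load on any edge $e$ is $\le\sum_S\c(\partial S)R_S(e)\le q\,\c(e)$, where $q$ is the \emph{quality} of the decomposition. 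Because $b\mapsto(\one_S\cdot b)_S$ and the certificate-realizations are linear, $b\mapsto\jmath$ is a bounded linear map with $\D\jmath=b$; composed with any linear left inverse of $\P$ it is a $\{1\}$-inverse of $\P\D$ of norm $\le q$, certifying the \emph{linear} condition number $\kappa(\P\D)\le q$, not merely $\tilde\kappa$.

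The real obstacle — and essentially the entire content of the theorem — is producing, in $m^{1+o(1)}$ time, a hierarchical decomposition of quality $q=n^{o(1)}$: a laminar $\FF$ for which the up-the-tree routing congests $G$ by only $n^{o(1)}$, while the boundary capacities $\c(\partial S)$ stay ``honest'' enough that the cut lower bound of the easy half is tight to within $n^{o(1)}$. This is exactly the nearly-linear-time cut-based decomposition of Madry~\cite{Madry10} (built on the Spielman--Teng-type primitives~\cite{ST} and used in~\cite{Sherman13}): one recursively partitions $G$ using approximate sparsest-cut computations, each level shrinking a cluster's effective diameter geometrically, with recursion depth and per-level loss both $2^{O(\sqrt{\log n\log\log n})}=n^{o(1)}$, and the certificates $R_S$ emerging from the flow dual to the sparse cuts. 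I would invoke that construction as a black box and only verify that its guarantee feeds through the two inequalities above to the stated bound on $\kappa(\P\D)$, and through the tree size to the claimed $n^{1+o(1)}$ space and evaluation times; the bottom-up/top-down evaluation of $\P,\P^*$ and the bound $|\FF|\le n^{1+o(1)}$ are then routine.
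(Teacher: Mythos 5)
Your proposal matches the paper's treatment: the theorem is imported from \cite{Sherman13}, and the paper itself only sketches the same ingredients you describe --- the cut-congestion preconditioner $(\P b)_S = \one_S\cdot b/\c(\partial S)$ over a hierarchical family of cuts, with the nearly-linear-time construction and the $n^{o(1)}$ routing/quality bound deferred to Madry-style decompositions in the earlier work. So your sketch is essentially the same approach, with the genuinely hard step (building the decomposition of quality $n^{o(1)}$ in $m^{1+o(1)}$ time) invoked as a black box exactly as the paper does.
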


Let $\C : \R^m \to \E$ be the diagonal capacity matrix; $(\C x)(e) = \c(e) x(e)$.  The nearly-linear time algorithm follows by using the preconditioner $\P$ in the preceding theorem, applying theorem \ref{l1linf} to the problem of minimizing $\|x\|_\infty$ subject to $\P \D \C x = \b$.  
A final minor flow-specific part is needed to achieve zero error: the simple $(m,0)$-approximation algorithm that consists of routing all flow through a maximum-capacity spanning tree.


\section{Uncapacitated Min-Cost Flow}\label{mincostsec}
In this section, we present a nearly-linear time algorithm for the uncapacitated minimum-cost flow problem on undirected graphs.  A \emph{cost graph} associates a \emph{cost} $\l(e) > 0$ with each edge $e$ in $G$.  The \emph{cost} norm on the space of flows $\E$ is defined by $\|\jmath\|_{\l} = \sum_e \|\jmath(e)\|\l(e)$.
The \emph{undirected uncapacitated minimum-cost flow problem} is specified by the graph $G$, lengths $\ell$, and demands $\b \in \V$, and consists of finding $\jmath \in \E$ with $\D \jmath = \b$ minimizing $\|\jmath\|_{\l}$.
Note that, unlike max-flow, the \emph{single-source} uncapacitated min-cost flow problem is significantly easier than the distributed source case: the optimal solution is to route along a single-source shortest path away from the source.  The problem only becomes interesting with distributed sources and sinks.

As a new application of generalized preconditioning, we obtain a nearly-linear time algorithm for approximately solving this problem.
\begin{theorem}\label{mincost}  There is a randomized algorithm that, given a length-graph $G = (V,E,\l)$, takes $\frac{m^{1+o(1)}}{\epsilon^2}$ time and outputs a $(1+\epsilon,0)$-solution to the undirected uncapacitated min-cost flow problem.
\end{theorem}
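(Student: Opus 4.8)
The plan is to instantiate the generalized preconditioning framework of section \ref{precondsec} with domain $\X = \E$ under the cost norm $\|\cdot\|_\l$ and with the operator being the divergence $\D : \E \to \V$, so that the uncapacitated min-cost flow problem is literally the minimum-norm problem ``find $\jmath$ with $\D\jmath = \b$ minimizing $\|\jmath\|_\l$''. Since $\|\cdot\|_\l$ is a weighted $\ell_1$ norm, after rescaling coordinates by edge lengths this is an $\ell_1$ minimum-norm problem; thus, by section \ref{precondsec}, it suffices to exhibit an injective generalized preconditioner $\P : \V \to \ell_1^N$ for which the preconditioned norm $\|\P\b\|_1$ approximates the optimal routing cost $\|\b\|_{\opt(\D)} = \opt_\l(\b)$ to within an $m^{o(1)}$ factor (equivalently $\tilde{\kappa}_{1 \to 1}(\P\D) \le m^{o(1)}$), with $N = m^{1+o(1)}$ and $\P,\P^*$ applicable in $m^{1+o(1)}$ time. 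Theorem \ref{l1} then turns this into an $\ell_1$ solver, and the corollaries of Theorem \ref{comp} convert its residual error to zero.

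\emph{Construction of $\P$.} First interpret $\l$ as edge lengths, inducing the shortest-path metric $d_\l$ on $V$, and apply Bourgain's embedding \cite{Bourgain85} to obtain $\phi : V \to \ell_1^k$ with $k = O(\log^2 n)$ and $d_\l(u,v)/O(\log n) \le \|\phi(u) - \phi(v)\|_1 \le d_\l(u,v)$. Paying this $O(\log n)$ distortion in the final condition number reduces the task to preconditioning min-cost flow of the $n$ point masses $\phi_*\b$ in Manhattan space $\ell_1^k$: a flow in $G$ pushes forward to a flow of the point masses of no greater cost, and conversely a flow in $\ell_1^k$ decomposes into source-sink paths which, replaced by $G$-geodesics, pull back to a genuine flow in $G$ of cost at most $O(\log n)$ times larger. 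In $\ell_1^k$ we use the hierarchical scheme of the introduction: rescale so the points lie in the unit cube, let $V_0 \subseteq V_1 \subseteq \cdots \subseteq V_T$ be the dyadic lattices of spacing $2^{-t}$ with $T = O(\log(n\Delta))$, $\Delta$ the aspect ratio, and let $\P\b$ have one coordinate for each pair $(t,Q)$ with $Q$ an occupied dyadic cell at level $t$, recording $2^{-t}$ times the net residual demand the random-rounding routing would move out of $Q$ at step $t$ (a crude deterministic surrogate for the scheme's expected cost). Because the rounding step only transports a cell's mass to the parent lattice, the occupied support can only shrink as $t$ decreases, so $N \le n\,k\,T = m^{1+o(1)}$, and $\P,\P^*$ are a single bottom-up aggregation over the hierarchy, computable in $m^{1+o(1)}$ time.

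\emph{Bounding the condition number — the crux.} This requires two inequalities relating $\|\P\b\|_1$ to $\opt_\l(\b)$. For the lower-bound direction, each coordinate of $\P\b$ is, up to its dyadic weight, a lower bound on the flow any routing must push across the corresponding cell boundary along one coordinate axis; summing these over the $k$ axes and $T$ scales with bounded multiplicity — exactly the telescoping used in the analysis of multigrid for lattice Laplacians, now with step functions in place of hat functions — yields $\|\P\b\|_1 \le \mathrm{polylog}(n)\cdot\opt_{\ell_1^k}(\phi_*\b) \le m^{o(1)}\opt_\l(\b)$. For the reverse direction, the hierarchical scheme is an explicit randomized flow of expected cost $\le m^{o(1)}\|\P\b\|_1$: at step $t$ each occupied cell's residual mass is moved a distance $O(2^{-t})$ to an adjacent level-$(t{-}1)$ lattice point by independently, fairly rounding each unaligned coordinate, which keeps the expected displacement $2^{-t}$ per unaligned coordinate with a controlled second moment; the $O(k)$ coordinates and $T$ levels telescope against $\|\P\b\|_1$, and level $0$ is finished by pulling all mass to a uniformly random corner of the unit cube. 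Pulling this flow back through $\phi$ multiplies the cost by at most $O(\log n)$, so $\opt_\l(\b) \le m^{o(1)}\|\P\b\|_1$. Combining the two gives $\tilde{\kappa}_{1 \to 1}(\P\D) \le m^{o(1)}$. I expect this step — making the multigrid analogy precise for the random-rounding scheme, in particular the variance bound controlling the scheme's cost — to be the main obstacle; everything else is bookkeeping.

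\emph{Assembly and running time.} With $\tilde{\kappa} = m^{o(1)}$ and $N = m^{1+o(1)}$, Theorem \ref{l1} gives a $(1+\epsilon,\delta)_{1 \to 1}$-solver for the preconditioned problem using $m^{o(1)}(\epsilon^{-2} + \log\delta^{-1})$ simple iterations, each iteration amounting to one application of $\P$ and $\P^*$ (plus $\D,\D^*$ and an $O(N)$ update and a trivial $\ell_\infty$-ball support-oracle call), hence $m^{1+o(1)}$ time, for a total of $m^{1+o(1)}(\epsilon^{-2} + \log\delta^{-1})$. Finally apply the $(M,0)$-composition corollary of Theorem \ref{comp} with the trivial $(M,0)$-solver that routes $\b$ along a single shortest-path tree rooted at an arbitrary vertex; $M$ is polynomially bounded in $n$ and the input's bit-length, so taking $\delta = \Theta(\epsilon/M)$ — which, under the standard assumption that input numbers have $m^{o(1)}$ bits, adds only $m^{o(1)}$ iterations — produces a $(1+O(\epsilon),0)$-solution in $m^{1+o(1)}\epsilon^{-2}$ time. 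Rescaling $\epsilon$ by a constant gives Theorem \ref{mincost}.
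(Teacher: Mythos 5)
Your overall route is the paper's route: reduce to an $\ell_1$ minimum-norm problem, build the preconditioner from Bourgain's embedding plus the dyadic-lattice random-rounding hierarchy, solve with Theorem \ref{l1}, and finish with a composition against a trivial zero-error solver (the paper uses a minimum-cost spanning tree, which is a clean $(n,0)$-solver via Lemma \ref{mintree}; your shortest-path-tree variant has $M$ exponential in the bit-length rather than ``polynomial,'' but since only $\log M$ enters through $\log\delta^{-1}$ this is harmless under your assumption). The genuine gap is that you skip the Johnson--Lindenstrauss dimension reduction, and your running-time claims for $\P$ do not survive without it. The lattice preconditioner's construction and evaluation cost is \emph{exponential} in the embedding dimension $k$: a unit demand at one embedded point, pushed through the expected rounding cascade, spreads over up to $2^k$ corners of its dyadic cell at every level, so the residual demands $b_t$ (and hence any coordinate of $\P$ that aggregates them, whether per lattice point as in the paper or per cell as in your ``net outflow'' variant) require tracking up to $n2^{k}$ fractional masses per level, and each column of $\P$ has up to $2^k$ non-zeros. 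Your assertion that ``the occupied support can only shrink as $t$ decreases, so $N\le nkT$'' conflates the set of occupied \emph{cells} (indeed at most $n$ per level) with the set of lattice points carrying residual demand, which grows to roughly $n2^k$. With Bourgain's $k=O(\log^2 n)$ this is $2^{\Theta(\log^2 n)}$, i.e.\ quasi-polynomial, destroying the $m^{1+o(1)}$ bound; this is exactly why the paper first passes through $\ell_2$ and applies Johnson--Lindenstrauss to reach $k=\Theta(\sqrt{\log n})$, accepting an extra $\exp(O(\sqrt{\log n}))$ distortion in the condition number so that $2^k=n^{o(1)}$ and $\P$ has the $n^{1+o(1)}$ size promised in Theorem \ref{minpre}.

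A secondary shortfall: the step you yourself flag as the crux --- that each rounding reduction does not increase the optimal routing cost, and the resulting two-sided bound $\|b_T\|_{\opt}\le\sum_t k2^{-t}\|b_t\|_1\le 2k(T+1)\|b_T\|_{\opt}$ --- is asserted, not proved, and your sketch of the lower-bound direction (``flow across cell boundaries'') is not how it is actually established. In the paper this is Lemma \ref{indstep}(2) together with Theorem \ref{prenorm}: by linearity it suffices to check a unit demand between two half-spaced neighbors, where the product random rounding cancels half the mass and leaves a residual routable at cost exactly $1/2$ (equivalently, the coordinatewise rounding is non-stretching in expectation), and the $\|b_s\|_1$ terms are charged against $\|b_s\|_{\opt}$ via the $2^{-s}$-separation of $V_s$ (Lemma \ref{sepdiam}). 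The argument is short, but without it your condition-number bound $\tilde{\kappa}_{1\to1}(\P\D\L^{-1})\le m^{o(1)}$ is unsupported, so you should supply it (or cite it) rather than defer it.
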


As expected, the bulk of the algorithm lies in constructing a good preconditioner.
After introducing some useful definitions and lemmas, we spend the majority of this section constructing and analyzing a preconditioner for this problem.
\begin{theorem}\label{minpre} There is a randomized algorithm that, given a length-graph $G$, takes $\O(m\log^2 n + n^{1+o(1)})$ time and outputs a $n^{1+o(1)} \times n$ matrix $\P$ with $\kappa_{\l to \infty}(\P \D) \leq n^{o(1)}$.
Every column of $\P$ has $n^{o(1)}$ non-zero entries.
\end{theorem}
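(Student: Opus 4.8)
\medskip
\noindent\textbf{Proof plan.}
The strategy is to reduce to the case where $\l$ is an $\ell_1$ metric, take $\|\P\b\|_1$ to be a crude linear estimate of the cost of the hierarchical random-rounding routing, and bound $\kappa(\P\D)$ by showing this estimate two-sidedly approximates the optimal flow cost $\|\b\|_{\opt(\D)}$ (the transportation distance between the positive and negative parts of $\b$) to within $n^{o(1)}$. First I would compute, via the algorithmic form of Bourgain's embedding (Linial--London--Rabinovich: $O(\log^2 n)$ shortest-path computations), a non-expansive embedding of the shortest-path metric of $(G,\l)$ into $\ell_1^D$ with $D=O(\log^2 n)$ and contraction at most $O(\log n)$, in $O(m\log^2 n)$ time. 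Since an $O(\log n)$-bi-Lipschitz change of the codomain norm alters $\kappa(\P\D)$ by at most $O(\log n)=n^{o(1)}$, it suffices to build a preconditioner for the case $\l(uv)=\|x_u-x_v\|_1$. After rescaling and translating (first ruling out super-polynomial aspect ratio, or paying an extra $O(\log n)$ via standard bucketing), all $n$ images lie in $[0,1]^D$; fix $T=O(\log n)$ and the nested dyadic lattices $V_0\subset\cdots\subset V_T$ of spacing $2^{-t}$.

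Next I would describe $\P$. In the routing scheme, levels are processed $t=T,\dots,0$; at level $t$ each $x\in V_t\setminus V_{t-1}$ has its current demand pulled along a shortest path from a nearest neighbour in $V_{t-1}$, obtained by rounding each unaligned coordinate up or down using shared random bits, so each unit moves distance exactly $2^{-t}$ per unaligned coordinate; at level $0$ all demand (now on hypercube corners) is pulled to a uniformly random corner. $\P$ records a crude, compact, linear estimate of the \emph{expected} cost of this scheme: for each level $t$ and each dyadic cell $c$ of side $2^{-t}$ containing at least one image (at most $n$ such ``active'' cells per level), a row with entry $w_t$ at every vertex $v$ with $x_v\in c$ (where $w_t=\Theta(2^{-t}D)$, adjusted to account for the random-rounding cancellation), plus $O(n)$ rows for the base case. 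This gives $O(nTD)=n^{1+o(1)}$ rows; each column $v$ is supported on the $T+O(1)=n^{o(1)}$ cells through which $x_v$ passes; and, after the embedding, $\P$ is computed in $n^{1+o(1)}$ time by bucketing images into cells level by level and aggregating demands.

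It then remains to bound $\kappa(\P\D)$ by showing $\|\P\b\|_1=\|\b\|_{\opt(\D)}\cdot n^{\pm o(1)}$ for all $\b$. The lower bound $\|\P\b\|_1\ge\|\b\|_{\opt(\D)}/n^{o(1)}$ is the easier direction: via Kantorovich duality one builds a $1$-Lipschitz potential from the cell imbalances (bump functions of height $\sim 2^{-t}$ on the cells where $\b$ is imbalanced, signed accordingly), losing only an $O(D)$ factor (worst-coordinate gap versus $\ell_1$ gap) and an $O(T)$ factor (rescaling so overlapping bumps stay $1$-Lipschitz), both $n^{o(1)}$, and this is uniform in $\b$. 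The upper bound $\|\P\b\|_1\le n^{o(1)}\|\b\|_{\opt(\D)}$ is equivalent to the statement that the random-rounding routing is $n^{o(1)}$-competitive, which I would prove in the multigrid style: fix an optimal transportation of $\b$ and track, scale by scale, the portion of two demand units at distance $\sim 2^{-t}$ that the shared-bit rounding has not yet merged; each level contributes an independent constant-probability chance of merging, so this ``unhealed'' fraction decays geometrically, exactly mirroring the per-level contraction of the error in multigrid for lattice Laplacians, and summing over levels bounds the routing cost (hence $\|\P\b\|_1$) by $n^{o(1)}$ times the transportation cost.

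The main obstacle is this last point, and specifically the interaction between the cancellation and the crude estimate. A naive cell-indicator estimate (ignoring the rounding) is not $n^{o(1)}$-accurate: a short edge whose endpoints straddle a coarse cell boundary has a huge ``stretch'' in a fixed dyadic hierarchy, and no single random offset removes all such bad pairs (a union bound over $m$ edges only gives $\Theta(m)$). What rescues the construction is that the randomized nearest-neighbour rounding \emph{heals} such straddling pairs within a logarithmic number of levels, so the true expected routing cost of every edge's dipole is only $n^{o(1)}$ times its length; the delicate work is to choose the weights $w_t$ so that $\P$'s compact estimate faithfully reflects this healing (so that no adversarial short edge, or demand pattern, inflates $\|\P\b\|_1$) while keeping each column $n^{o(1)}$-sparse. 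Once the competitiveness-with-healing bound is established, combining it with the deterministic dual lower bound yields $\kappa(\P\D)\le n^{o(1)}$, and the time and sparsity claims are as above.
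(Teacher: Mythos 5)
Your overall architecture (Bourgain embedding into $\ell_1$, a dyadic lattice hierarchy, a linear per-level estimate, and a two-sided comparison with $\|b\|_{\opt}$) matches the paper's, but the core object is defined incorrectly in a way you half-detect and never repair. You take the rows of $\P$ to be cell indicators: one row per active dyadic cell at each level, with a scalar weight $w_t$ on every vertex whose image lies in that cell. For such a matrix, a unit dipole between two points at distance $\epsilon$ straddling a coarse cell boundary gets charged $2w_t$ at every level where the two points fall in different cells, which is independent of $\epsilon$; no choice of the scalars $w_t$ can make this at most $n^{o(1)}\epsilon$ while preserving the lower bound for well-separated demands, so $\kappa(\P\D)$ degrades with the aspect ratio. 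Your proposed rescue --- ``adjust $w_t$ to account for the random-rounding cancellation'' plus a probabilistic healing analysis with shared random bits --- does not close this gap, because the cancellation is a property of how each vertex's demand is \emph{assigned} to lattice points, not of per-level scale factors, and because a bound in expectation over random roundings is not a bound on the fixed linear functional $\|\P b\|_1$, which must hold for every $b$ simultaneously. The paper's $\P$ is structurally different: its rows record the \emph{reduced demands} $b_t$ themselves, i.e.\ each vertex is assigned \emph{fractionally} to nearby lattice points (uniformly over the $2^j$ roundings of its unaligned coordinates, composed across levels), so two nearby points straddling a boundary receive nearly identical fractional images and cancel inside $\|b_t\|_1$. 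With that definition the bound is deterministic: the per-level cost is at most $k2^{-t}\|b_t\|_1 \leq 2k\|b_t\|_{\opt}$ by $2^{-t}$-separation, and --- this is the key lemma your plan lacks --- the reduction never increases $\|\cdot\|_{\opt}$, proved by decomposing an optimal flow into single-edge dipoles and checking by direct computation that the averaged rounding routes each reduced dipole at no greater cost. No multigrid-style geometric decay across levels is needed; the loss is simply additive, $\OO(k(T+1))$.

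A second genuine omission is dimension reduction. You keep $D=\OO(\log^2 n)$, but once the rows are fractional assignments (as they must be for correctness), each vertex spreads to up to $2^{\OO(D)}$ lattice points per level, so the row count and column sparsity become super-polynomial. This is exactly why the paper inserts a Johnson--Lindenstrauss step down to $k=\Theta(\sqrt{\log n})$ dimensions, paying an extra $\exp(\OO(\sqrt{\log n}))=n^{o(1)}$ distortion so that $2^{\OO(k)}=n^{o(1)}$, which is what makes the claimed $n^{1+o(1)}$ size, $n^{o(1)}$ column sparsity, and $n^{1+o(1)}$ construction time attainable. Without replacing the indicator rows by the interpolating (reduced-demand) rows and without the dimension reduction, your construction does not prove the theorem.
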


Dual to cost is the \emph{stretch} norm on $\E^*$ is $\|f\|_{\l*} = \max_e |f(e)|/\l(e)$.  We say a function $\phi \in \V$ is $L$-\emph{Lipschitz} iff $\|\D^* \phi\|_{\l*} \leq L$; that is, $|\phi(x) - \phi(y)| \leq L \l(xy)$ for all edges $xy$.  The dual problem is to maximize $\phi \cdot \b$ over $1$-Lipschitz $\phi \in \V$.

A length function $\l$ induces an intrinsic metric $\d : V \times V \to \R$, with distances determined by shortest paths.  The preceding definition of Lipschitz is equivalent to $|\phi(x) - \phi(y)| \leq L \d(x,y)$ for all $x,y \in V$.  That is, the Lipschitz constant of $\phi$ depends only on the metric, and is otherwise independent of the particular graph or edge lengths inducing that metric.
As the dual problem depends only on the induced metric, so must the \emph{value} of the min-cost flow.  Therefore, we may unambiguously write
\[ \|b\|_{\opt(\d)} = \max \{\phi \cdot b : \phi \textrm{ is 1-Lipschitz w.r.t. }\d\} \]

The monotonicity of cost with distance follows immediately.
\begin{lemma} If $\d(x,y) \leq \tilde{\d}(x,y)$ for all $x,y \in \V$, then for all $b \in \V_{\perp \one}$,
 \[\|b\|_{\opt(\d)} \leq \|b\|_{\opt(\tilde{\d})} \]
 \end{lemma}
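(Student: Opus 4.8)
The plan is to prove the inequality purely on the dual side, using the variational formula
\[ \|b\|_{\opt(\d)} = \max\{ \phi \cdot b : \phi \text{ is } 1\text{-Lipschitz w.r.t. } \d \} \]
recorded just above the statement (and the analogous formula for $\tilde{\d}$). The entire content of the lemma is the observation that enlarging the metric enlarges the feasible region of this maximization, so a fixed linear functional maximized over the larger set can only grow.

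Concretely, the first step is to check the inclusion of feasible sets. Suppose $\phi$ is $1$-Lipschitz with respect to $\d$, i.e.\ $|\phi(x) - \phi(y)| \le \d(x,y)$ for all $x,y \in V$. Since $\d(x,y) \le \tilde{\d}(x,y)$ by hypothesis, we get $|\phi(x) - \phi(y)| \le \tilde{\d}(x,y)$ for all $x,y$, so $\phi$ is also $1$-Lipschitz with respect to $\tilde{\d}$. Hence
\[ \{ \phi : \phi \text{ is } 1\text{-Lipschitz w.r.t.\ } \d \} \;\subseteq\; \{ \phi : \phi \text{ is } 1\text{-Lipschitz w.r.t.\ } \tilde{\d} \}. \]

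The second step is immediate: with $b$ fixed, the linear functional $\phi \mapsto \phi \cdot b$ attains a larger maximum over the larger set, so
\[ \|b\|_{\opt(\d)} = \max_{\phi \text{ } 1\text{-Lip w.r.t.\ } \d} \phi \cdot b \;\le\; \max_{\phi \text{ } 1\text{-Lip w.r.t.\ } \tilde{\d}} \phi \cdot b = \|b\|_{\opt(\tilde{\d})}. \]
The hypothesis $b \in \V_{\perp \one}$ plays no role in the inequality itself; it only guarantees both maxima are finite (adding a constant to $\phi$ leaves $\phi \cdot b$ unchanged, neutralizing the one unbounded direction of the Lipschitz ball), and without it both sides are $+\infty$ and there is nothing to prove.

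I do not expect any genuine obstacle here. Nothing about shortest-path structure, connectivity, or the graphs underlying $\d$ and $\tilde{\d}$ is used — only the stated dual formula and the pointwise nature of the Lipschitz condition. The one thing to phrase carefully is that ``$1$-Lipschitz with respect to $\d$'' is the family of pointwise constraints $|\phi(x)-\phi(y)| \le \d(x,y)$ indexed by pairs $(x,y)$, each of which relaxes when $\d(x,y)$ is replaced by the larger $\tilde{\d}(x,y)$. (One could alternatively argue on the primal side, since $\|b\|_{\opt(\d)}$ is the optimal transportation cost between the positive and negative parts of $b$ under ground metric $\d$, and optimal transport cost is monotone in the ground metric; but the dual argument above is the cleanest given what has been set up.)
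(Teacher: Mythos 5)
Your proof is correct and matches the paper's intended argument: the paper states the lemma immediately after the dual formula $\|b\|_{\opt(\d)} = \max\{\phi\cdot b : \phi \text{ 1-Lipschitz w.r.t.\ } \d\}$ and says monotonicity ``follows immediately,'' with the implicit reasoning being exactly your observation that any $\d$-Lipschitz $\phi$ is also $\tilde{\d}$-Lipschitz, so the maximization is over a larger feasible set. Your remarks on the role of $b \in \V_{\perp\one}$ are a harmless (and accurate) addition.
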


We recall that in some cases, the identity operator is a good preconditioner; for max-flow, such is the case in constant-degree expanders.  For min-cost flow, the analagous case is a metric where distances between any pair of points differ relatively by small factors.
\begin{defn} Let  $U \subseteq V$.  We say $U$ is $r$-\emph{separated} if $\d(x,y) \geq r$ for all $x,y \in U$ with $x \neq y$.  We say $U$ is $R$-\emph{bounded} if $\d(x,y) \leq R$ for all $x,y \in U$.
\end{defn}

\begin{lemma}\label{sepdiam} Let $U \subset V$ be $R$-bounded and $r$-separated with respect to $\d$.  Let $b \in \V_{\perp \one}$ be demands supported on $U$.  Then,
\[ \frac{r}{2} \|b\|_1 \leq \|b\|_{\opt(\d)} \leq \frac{R}{2}\|b\|_1 \] 
\end{lemma}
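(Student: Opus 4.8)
The plan is to work directly from the dual characterization recalled just above, $\|b\|_{\opt(\d)} = \max\{\phi \cdot b : \phi \textrm{ is 1-Lipschitz w.r.t. }\d\}$, and prove the two inequalities separately. Both directions lean on the hypothesis $\one_V \cdot b = 0$: it lets us add any constant to a test function $\phi$ without changing $\phi \cdot b$, and it means the positive and negative parts of $b$ each have mass $\tfrac12\|b\|_1$. Throughout, write $S = \{x : b(x) \neq 0\} \subseteq U$; if $S = \emptyset$ then $b = 0$ and all three quantities vanish, so assume $|S| \geq 2$ (total demand zero rules out $|S| = 1$).

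For the upper bound, fix an arbitrary $1$-Lipschitz $\phi$. Since $U$ is $R$-bounded, $|\phi(x) - \phi(y)| \leq \d(x,y) \leq R$ for all $x,y \in S$, so $\phi$ ranges over an interval of length at most $R$ on $S$. Subtract from $\phi$ the midpoint of that interval (harmless since $\one_V \cdot b = 0$) to get $|\phi(x)| \leq R/2$ for every $x \in S$. Then $\phi \cdot b = \sum_{x \in S} \phi(x) b(x) \leq \sum_{x \in S} |\phi(x)|\,|b(x)| \leq \tfrac{R}{2}\|b\|_1$, and taking the maximum over $\phi$ gives $\|b\|_{\opt(\d)} \leq \tfrac{R}{2}\|b\|_1$. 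The only subtlety is the factor of $2$: it comes precisely from re-centering $\phi$ rather than naively bounding $|\phi(x) - \phi(x_0)| \leq R$ against a fixed basepoint.

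For the lower bound I would exhibit a single good test function. Define $\phi$ on $S$ by $\phi(x) = \tfrac{r}{2}\sgn(b(x))$. For distinct $x,y \in S$ we have $\d(x,y) \geq r$ by $r$-separation, while $|\phi(x) - \phi(y)| \leq r$, so $\phi$ is $1$-Lipschitz on $S$; by the standard (McShane) Lipschitz extension — e.g. $\phi(v) := \min_{x \in S}\big(\phi(x) + \d(v,x)\big)$, which is readily checked to be $1$-Lipschitz on all of $V$ and to agree with the given values on $S$ — it extends to a $1$-Lipschitz function on $V$. For this $\phi$, $\phi \cdot b = \sum_{x \in S}\tfrac{r}{2}\sgn(b(x))\,b(x) = \tfrac{r}{2}\sum_{x}|b(x)| = \tfrac{r}{2}\|b\|_1$, so $\|b\|_{\opt(\d)} \geq \tfrac{r}{2}\|b\|_1$. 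I do not expect a real obstacle here; the points to be careful about are using the values $\pm r/2$ (total gap $r$, exactly matching the separation, which is what produces the correct constant) and invoking a concrete Lipschitz extension formula so the claim is self-contained.
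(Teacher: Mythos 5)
Your proof is correct, but it follows a genuinely different route from the paper's. The paper argues on the primal side: it observes that $\frac{r}{2}\|b\|_1$ is \emph{exactly} the min-cost of routing $b$ when all pairwise distances equal $r$ (realized by a star graph with edge lengths $r/2$, a leaf per point of $U$), likewise $\frac{R}{2}\|b\|_1$ for the constant-$R$ metric, and then sandwiches $\d$ between these two constant metrics on $U$ and invokes the monotonicity lemma proved just before. You instead work entirely with the dual characterization $\|b\|_{\opt(\d)} = \max\{\phi\cdot b : \phi \text{ 1-Lipschitz}\}$: for the upper bound you re-center an arbitrary $1$-Lipschitz $\phi$ (using $\one_V\cdot b = 0$) so that $|\phi|\leq R/2$ on the support, and for the lower bound you exhibit the explicit certificate $\phi = \frac{r}{2}\sgn(b)$ on the support, extended to all of $V$ by the McShane formula, whose Lipschitz feasibility is exactly what $r$-separation guarantees. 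Both arguments are sound; yours is more self-contained (it does not need the monotonicity lemma and produces explicit dual certificates, with the re-centering step cleanly explaining where the factor $2$ comes from), while the paper's is shorter given that monotonicity is already available and yields both constants from a single exact computation on the star metric.
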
 \begin{proof} Note $\frac{r}{2}\|b\|_1$ is exactly the min-cost of routing $b$ in the metric where all distances are exactly $r$ (consider the star graph with edge lengths $r/2$, with a leaf for each $x \in U$).  The inequalities follow by monotonicity.
\end{proof}

To complete the proof of theorem \ref{mincost}, assuming theorem \ref{minpre}, we also need a simple approximation algorithm that yields a poor approximation, but with zero error.  This is easily achieved by the algorithm that routes all flow through a minimum cost (i.e., total length) spanning tree $T$.
\begin{lemma}\label{mintree} Let $T$ be a minimum cost spanning tree with respect to $\l$ on $\G$.  Consider the algorithm that, given $b \in \V_{\perp \one}$, outputs the flow $\jmath \in \E$ routing $b$ using only the edges in $T$.  Then,
\[ \|\jmath\| \leq n \|b\|_{\opt(\d)} \]
\end{lemma}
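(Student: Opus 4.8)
The plan is to write the tree flow $\jmath$ down explicitly and bound its cost one edge at a time against the dual optimum $\|b\|_{\opt(\d)}$. For a tree edge $e\in T$, deleting $e$ splits $T$ into two components; let $S_e\subseteq V$ be one of them. Because $\jmath$ is supported on the acyclic graph $T$ and satisfies $\D\jmath = b$, the flow across $e$ is forced: the net amount crossing $e$ must equal the net demand on one side of the fundamental cut, so $|\jmath(e)| = |\one_{S_e}\cdot b|$ (this common value is well defined since $\one_V\cdot b = 0$, so both sides give the same absolute value). Hence $\|\jmath\|_{\l} = \sum_{e\in T}\l(e)\,|\one_{S_e}\cdot b|$, and since $|T| = n-1$ it suffices to show $\l(e)\,|\one_{S_e}\cdot b| \le \|b\|_{\opt(\d)}$ for each $e$.

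The crux is the claim that for every tree edge $e$, the scaled indicator $\l(e)\,\one_{S_e}\in\V^*$ is $1$-Lipschitz with respect to $\d$. Here is where the minimum spanning tree hypothesis enters, through the cut property: $e$ is a minimum-length edge of $G$ among all edges with one endpoint in $S_e$ and the other in $V\setminus S_e$ (otherwise an exchange would produce a cheaper spanning tree). Consequently, any path in $G$ from a vertex of $S_e$ to a vertex of $V\setminus S_e$ must traverse some such crossing edge, of length at least $\l(e)$, so $\d(v, V\setminus S_e)\ge \l(e)$ for every $v\in S_e$. Now set $\psi(v) = \min\{\d(v, V\setminus S_e),\,\l(e)\}$. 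The map $v\mapsto \d(v, V\setminus S_e)$ is $1$-Lipschitz, and truncation by a constant preserves this, so $\psi$ is $1$-Lipschitz; but by the distance bound just established $\psi$ equals $\l(e)$ on $S_e$ and $0$ off $S_e$, i.e. $\psi = \l(e)\,\one_{S_e}$.

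Given the claim, both $\psi = \l(e)\,\one_{S_e}$ and $-\psi$ are $1$-Lipschitz, so by the definition of $\|\cdot\|_{\opt(\d)}$ as the maximum of $\phi\cdot b$ over $1$-Lipschitz $\phi$ we get $\l(e)\,|\one_{S_e}\cdot b| = |\psi\cdot b| \le \|b\|_{\opt(\d)}$. Summing over the $n-1$ edges of $T$ gives $\|\jmath\|_{\l} = \sum_{e\in T}\l(e)\,|\one_{S_e}\cdot b| \le (n-1)\|b\|_{\opt(\d)} \le n\|b\|_{\opt(\d)}$, as claimed. I expect the only real content to be the Lipschitz claim: the point is that the MST cut property supplies a \emph{uniform} lower bound $\l(e)$ on the distance from $S_e$ to its complement, and that is exactly what lets the truncation collapse the distance function down to a multiple of the cut indicator; everything else is bookkeeping about tree flows and unwinding the definition of the dual norm.
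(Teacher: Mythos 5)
Your proof is correct, but it goes by a genuinely different route than the paper. The paper's argument is two lines: the flow supported on $T$ is the \emph{unique} such flow, hence it is optimal for the tree metric $\tilde{\d}$ induced by paths in $T$; since $T$ is a minimum cost spanning tree, every tree edge on the $T$-path between $x$ and $y$ has length at most $\d(x,y)$, so $\tilde{\d} \leq n\,\d$, and the monotonicity lemma then gives $\|\jmath\|_{\l} = \|b\|_{\opt(\tilde{\d})} \leq n\|b\|_{\opt(\d)}$. You instead decompose the cost edge by edge via the fundamental cuts, $\|\jmath\|_{\l} = \sum_{e \in T} \l(e)\,|\one_{S_e}\cdot b|$, and certify each term against the dual by exhibiting the $1$-Lipschitz potential $\psi = \min\{\d(\cdot, V\setminus S_e),\,\l(e)\} = \l(e)\,\one_{S_e}$, where the MST cut property supplies the uniform distance bound $\d(S_e, V\setminus S_e) \geq \l(e)$ that makes the truncation collapse onto the scaled cut indicator. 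Both arguments ultimately rest on the same exchange/cut property of the MST (the paper's bound $\tilde{\d}(x,y)\le n\,\d(x,y)$ is proved by exactly that property applied to each tree edge separating $x$ from $y$), but yours is a direct primal--dual certificate per edge, whereas the paper's is a metric comparison routed through the monotonicity lemma. Your version is more self-contained and in fact gives the slightly sharper constant $n-1$; the paper's is shorter because it reuses machinery (uniqueness of tree flows, monotonicity of $\|\cdot\|_{\opt(\d)}$ in $\d$) already set up for the rest of Section \ref{mincostsec}. One small point of bookkeeping: your scaled indicators $\l(e)\,\one_{S_e}$ are dual-feasible by the paper's metric characterization of Lipschitz functions ($|\phi(x)-\phi(y)|\le \d(x,y)$ for all pairs), which the paper notes is equivalent to the edgewise definition, so your appeal to it is fine.
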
 \begin{proof}  The flow on $T$ is unique, and therefore, optimal for the metric $\tilde{\d}$ induced by paths restricted to lie in $T$.  As $T$ is a minimum cost spanning tree, $\tilde{\d}(x,y) \leq n \d(x,y)$.
\end{proof}

We now prove theorem \ref{mincost}.  Let $\L : \R^m \to \E$ be the diagonal length matrix $(\L x)(e) = \ell(e) x(e)$, and let $\P$ be the preconditioner from theorem \ref{minpre}.   Computing $\P$ takes $\O(m\log^2 n) + n^{1+o(1)}$ time.  The minimum-cost flow problem is equivalent to the problem of minimizing $\|x\|_1$ subject to $\P\D\L^{-1} x = \P \b$.  As $\P$ has $n^{1+o(1)}$ non-zero entries and $\kappa_{1 \to 1}(\P\D\L^{-1}) \leq n^{o(1)}$, the total running time is $\frac{m^{1+o(1)}}{\epsilon^2}$.

\section{Preconditioning Min-Cost Flow}
A basic concept in metric spaces is that of an \emph{embedding}: $\psi: V \to \tilde{V}$, with respective metrics $\d,\tilde{\d}$.  An embedding has \emph{distortion} $L$ if for some $\mu > 0$,
 \[ 1 \leq \mu \frac{\tilde{\d}(\psi(x),\psi(y))}{\d(x,y)} \leq L \]

Metric embeddings are quite useful for preconditioning; if $\d$ embeds into $\tilde{\d}$ with distortion $L$, then the monotonicity lemma implies the relative costs of flow problems differ by a factor $L$.  It follows that if we can construct a preconditioner $\P$ for the min-cost flow problem on $\tilde{\d}$, the same $\P$ may be used to precondition min-cost flow on $\d$, with condition number at most $L$-factor worse.  

Our preconditioner uses embeddings in two ways.  The preconditioner itself is based on a certain hierarchical routing scheme.  Hierarchical routing schemes based on embeddings into tree-metrics have been studied extensively(see e.g. \cite{FRT}), and immediately yield preconditioners with polylogarithmic condition number.  However, we are unable to efficiently implement those schemes in nearly-linear time.  Instead, we apply Bourgain's $\O(\log n)$-distortion embedding\cite{Bourgain85} into $\ell_1$.  That is, by paying an $\O(\log n)$ factor in condition number, we may completely restrict our attention to approximating min-cost flow in $\ell_1$ space.   

\begin{theorem}[Bourgain's Embedding\cite{Bourgain85}]\label{BGlemma}
Any $n$-point metric space embeds into $\ell_p$ space with distortion $\O(\log n)$.
If $\d$ is the intrinsic metric of a length-graph $G = (V,E)$ with $m$ edges, there is a randomized algorithm that takes $\O(m\log^2 n)$ time and w.h.p. outputs such an embedding, with dimension $\O(\log^2 n)$.
\end{theorem}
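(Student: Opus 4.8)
The plan is to give the standard randomized Bourgain construction, verify the two distortion inequalities, and then observe that each coordinate of the embedding is computable by a single shortest-path solve. Fix the $n$-point metric $(V,\d)$; in the graph case $\d$ is the intrinsic metric of $G=(V,E,\l)$. Set $K=\lceil\log_2 n\rceil$ and $q=\Theta(\log n)$. For each scale $j\in\{1,\dots,K\}$ and repetition $i\in\{1,\dots,q\}$ sample $A_{ji}\subseteq V$ by keeping each vertex independently with probability $2^{-j}$, and define the coordinate $f_{ji}(x)=\d(x,A_{ji}):=\min_{a\in A_{ji}}\d(x,a)$ (with $\d(x,\emptyset)$ set to the diameter, so an empty sample yields a constant coordinate). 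The embedding is $\psi(x)=\tfrac{1}{Kq}\big(f_{ji}(x)\big)_{j,i}$, of dimension $Kq=\O(\log^2 n)$. Non-expansion is immediate from the triangle inequality: every $f_{ji}$ is $1$-Lipschitz, so $\|\psi(x)-\psi(y)\|_1\le\d(x,y)$ (and likewise $\|\psi(x)-\psi(y)\|_p\le\d(x,y)$ after the corresponding renormalization).

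The content is the non-contraction bound, via the usual ball-growing argument. Fix $x\ne y$. For $t\ge 0$ let $\rho_t$ be the least radius with both $|B(x,\rho_t)|\ge 2^t$ and $|B(y,\rho_t)|\ge 2^t$, and let $\hat t\le K$ be the last scale with $\rho_{\hat t-1}<\d(x,y)/3$, truncating the radius at that scale down to $\d(x,y)/3$. For each scale $j\le\hat t$, minimality of $\rho_j$ forces one of the two open balls of radius $\rho_j$ (around $x$ or around $y$) to contain fewer than $2^j$ points; pairing that endpoint's small open ball with the \emph{other} endpoint's closed ball of radius $\rho_{j-1}$ (which has $\ge 2^{j-1}$ points), and noting that these two balls are disjoint since $\rho_{j-1}+\rho_j<\d(x,y)$, a random set of density $2^{-j}$ simultaneously avoids the former and meets the latter with probability at least a universal constant $c_0>0$; on that event $|f_{ji}(x)-f_{ji}(y)|\ge\rho_j-\rho_{j-1}$. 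Thus, for each fixed $j\le\hat t$, the $q$ repetitions are i.i.d.\ Bernoulli trials of success probability $\ge c_0$, so by a Chernoff bound at least $c_0 q/2$ of them succeed except with probability $e^{-\Omega(q)}$. Taking $q=\Theta(\log n)$ makes this less than $n^{-3}/K$, so a union bound over the $K$ scales and $\binom{n}{2}$ pairs gives, with high probability, that \emph{every} pair $x\ne y$ has $\sum_i|f_{ji}(x)-f_{ji}(y)|\ge\tfrac{c_0 q}{2}(\rho_j-\rho_{j-1})$ for all $j\le\hat t$. Summing over $j$ and telescoping (using the truncated top term), $\|\psi(x)-\psi(y)\|_1\ge\tfrac{1}{Kq}\cdot\tfrac{c_0 q}{2}\sum_{j\le\hat t}(\rho_j-\rho_{j-1})=\tfrac{c_0}{6K}\,\d(x,y)=\Omega\!\big(\d(x,y)/\log n\big)$, which with non-expansion yields distortion $\O(\log n)$ into $\ell_1$. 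The same coordinates give the general $\ell_p$ statement (for $p=2$ one inserts a Cauchy--Schwarz step at the telescoping), though only $\ell_1$ is used in the sequel.

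Finally, the algorithmic claim. In the graph case the whole vector $\big(f_{ji}(x)\big)_{x\in V}$ is exactly the single-source shortest-path distance in $G$ augmented by an apex vertex joined to each vertex of $A_{ji}$ by a zero-length edge; one Dijkstra call (or a linear-time single-source shortest-path routine for undirected graphs) produces it. Sampling the $Kq=\O(\log^2 n)$ sets costs $\O(n\log^2 n)$, and the $\O(\log^2 n)$ shortest-path solves cost $\O(m)$ each, for $\O(m\log^2 n)$ in total; the output is the list of $\O(\log^2 n)$ distance vectors, and all the failure probabilities above are $n^{-\Omega(1)}$. I expect the main obstacle to be the non-contraction analysis: the bookkeeping that makes each per-scale ``avoid one ball, meet the other'' event have constant probability (disjointness of the two balls and the correct handling of the truncated top scale), followed by the per-scale Chernoff estimate and union bound that upgrade that constant-probability statement into one holding uniformly over all pairs. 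Everything else — non-expansion, the dimension count, and the reduction of each coordinate to one shortest-path call — is routine.
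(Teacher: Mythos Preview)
The paper does not prove this theorem; it is quoted as a black-box result from Bourgain's original paper, with the algorithmic addendum being standard folklore. There is therefore nothing in the paper to compare your argument against. That said, your proof is a correct and complete rendition of the standard Bourgain construction and analysis: the random-subset coordinates, the $1$-Lipschitz upper bound, the ball-growing/telescoping lower bound with per-scale Chernoff and union bound, and the reduction of each coordinate to a single SSSP computation from a virtual source. One minor bookkeeping point: with $\O(\log^2 n)$ Dijkstra calls at $\O(m+n\log n)$ each you get $\O(m\log^2 n + n\log^3 n)$, which matches the stated $\O(m\log^2 n)$ only when $m=\Omega(n\log n)$; your parenthetical about a linear-time undirected SSSP routine is the right fix if one wants the bound literally as stated.
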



The complexity of constructing and evaluating our preconditioner is \emph{exponential} in the embedding's dimension.  Therefore, we reduce the dimension to $\Theta(\sqrt{\log n})$, incuring another distortion factor of $\exp(\O(\sqrt{\log n}))$.

\begin{theorem}[Johnson-Lindenstrauss Lemma\cite{JL,Dasgupta}]  \label{JLlemma}Any $n$ points in Euclidean space embed into $k$-dimensional Euclidean space with distortion $n^{\O(1/k)}$, for $k \leq \Omega(\log n)$.  In particular, $k = \O(\log n)$ yields $\O(1)$ distortion, and $k = \O(\sqrt{\log n})$ yields $\exp(\O(\sqrt{\log n}))$ distortion.

If the original points are in $d > k$ dimensions, there is a randomized algorithm that w.h.p outputs such an embedding in $\O(nd)$ time.
\end{theorem}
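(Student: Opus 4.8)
The plan is the standard two-step argument: establish a concentration (``distributional Johnson--Lindenstrauss'') statement for a single random linear map, then union-bound over all pairwise difference vectors. Concretely, I would take $\psi(x) = \mu\,Gx$, where $G$ is a $k \times d$ matrix with i.i.d.\ $N(0,1/k)$ entries (equivalently, orthogonal projection onto a uniformly random $k$-dimensional subspace, suitably rescaled), so that $\Exp\|Gx\|_2^2 = \|x\|_2^2$ for every fixed $x$, and $\mu$ is a global rescaling fixed at the end. For fixed $x \neq 0$, the quantity $k\|Gx\|_2^2/\|x\|_2^2$ has the $\chi^2_k$ distribution, and I would use the two one-sided Chernoff / Laurent--Massart tail bounds
\[
\Pr[\chi^2_k \le tk] \le (t\,e^{1-t})^{k/2}\ \ (t<1), \qquad \Pr[\chi^2_k \ge tk] \le (t\,e^{1-t})^{k/2}\ \ (t>1).
\]

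I would then choose an upper cutoff $c_2 = \Theta(1 + \tfrac{\log n}{k})$ and a lower cutoff $c_1 = \Theta\!\big(\min(1,\,n^{-c/k})\big)$ with $c$ a sufficiently large absolute constant, so that each of the two tails above is at most $n^{-3}$. A union bound over the $\binom{n}{2}$ difference vectors $x_i - x_j$ then shows that w.h.p.\ $c_1\|x_i-x_j\|_2^2 \le \|G(x_i-x_j)\|_2^2 \le c_2\|x_i-x_j\|_2^2$ simultaneously for all pairs; taking $\mu = 1/\sqrt{c_1}$, the map $\psi$ has distortion $\sqrt{c_2/c_1}$. Since $\sqrt{c_2}$, which is $\O(\sqrt{1+\log n/k})$, is itself at most $n^{\O(1/k)}$ in the range $k \le \Omega(\log n)$ (using $\log(1+u)\le u$ and that $\log n/k = \Omega(1)$ there, so constant factors are absorbed), this gives distortion $n^{\O(1/k)}$, as claimed; plugging in $k = \Theta(\log n)$ and $k = \Theta(\sqrt{\log n})$ yields the two stated special cases. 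For the algorithmic claim, sampling $G$ costs $\O(kd)$ and applying it to all $n$ points costs $\O(ndk)$, which is $\O(nd)$ up to the $k = n^{o(1)}$ factor (or, to make the bound literal, one substitutes a sparse/fast Johnson--Lindenstrauss transform); the success probability is exactly the union bound above.

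The main obstacle is the lower tail in the small-$k$ regime. The textbook $(1+\eps)$-distortion analysis sets $k = \Theta(\eps^{-2}\log n)$ and becomes vacuous once $k = o(\log n)$, so to push $k$ down to $\Theta(\sqrt{\log n})$ one must work directly with the raw $\chi^2_k$ lower tail $\Pr[\chi^2_k \le tk] \le (t\,e^{1-t})^{k/2} \approx (te)^{k/2}$ as $t \to 0$: this is what forces the contraction factor $c_1$, and hence the distortion, to be a genuine polynomial $n^{\Theta(1/k)}$ rather than $1+o(1)$. Some care is also needed to package $c_1$ and $c_2$ (as the $\min$/$\max$ of a constant and the tail-driven value) into a single bound valid uniformly for all $2 \le k \le \Omega(\log n)$. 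Everything else — the moment-generating-function computation behind the $\chi^2$ tails, the union bound, the final rescaling, and the running-time bookkeeping — is routine.
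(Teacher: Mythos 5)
Your proposal is correct and follows essentially the same route as the source the paper cites for this statement (the paper gives no proof of its own, quoting the Johnson--Lindenstrauss/Dasgupta--Gupta result): a random Gaussian (or random-subspace) projection, the $\chi^2_k$ moment-generating-function tail bounds, a union bound over the $\binom{n}{2}$ difference vectors, and the observation that for $k$ below $\log n$ the lower tail forces a contraction of order $n^{-\Theta(1/k)}$, which is exactly where the $n^{O(1/k)}$ distortion comes from. The only caveat, which you already flag, is that the dense projection runs in $O(ndk)$ rather than literally $O(nd)$ time; since $k = O(\sqrt{\log n})$ in the paper's application this discrepancy is immaterial (and can be removed with a sparse or fast JL transform).
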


Our reduction consists of first embedding the vertices into $\ell_2$ via theorem \ref{BGlemma}, reducing the dimension to $k = \O(\sqrt{\log n})$ using lemma \ref{JLlemma}, and then using the identity embedding of $\ell_2$ into $\ell_1$, for a total distortion of $\exp(O(\sqrt{\log n}))$.

To estimate routing costs in $\ell_1$, we propose an extremely simple hierarchical routing algorithm in $\ell_1$, and show it achieves a certain competetive ratio.  The routing scheme applies to the exponentially large graph corresponding to a lattice discretization of the $k$-dimensional cube in $\ell_1$; after describing such a scheme, we show that if the demands in that graph are supported on $n$ vertices, the distributed routing algorithm finds no demand to route in most of the graph, and the same routing algorithm may be carried out in a $k-d$-tree instead of the entire lattice.

On the lattice, the routing algorithm consists of sequentially reducing the support set of the demands from a lattice with spacing $2^{-{t-1}}$ to a lattice with spacing $2^{-t}$, until all demand is supported on the corners of a cube containing the original demand points.  On each level, the demand on a vertex not appearing in the coarser level is eliminated by pushing its demand to a distribution over the aligned points nearby.  

\subsection{Lattice Algorithm}
For $t \in \Z$, let $V_t = (2^{-t}\Z)^k$ be the lattice with spacing $2^{-t}$.  We design a routing scheme and preconditioner for min-cost flow in $\ell_1$, with initial input demands $b_T$ supported on a bounded subset of $V_T$.  For $t = T,T-1,\ldots$, the routing scheme recursively reduces a min-cost flow problem with demands supported on $V_t$ to a min-cost flow problem with demands supported on the sparser lattice $V_{t-1}$.  Given demands $b_t$ supported on $V_t$, the scheme consists of each vertex $x \in V_t$ pulling its demand $b_t(x)$ uniformly from the closest points to $x$ in $V_{t-1}$ along any shortest path, the lengths of which are at most $k 2^{-t}$.
For example, a point $x = (x_1,\ldots,x_k) \in V_1$ with $j \leq k$ non-integer coordinates and demand $b_1(x)$ pulls $b_1(x)2^{-j}$ units of flow from each of the $2^j$ points in $V_0$ corresponding to rounding those $j$ coordinates in any way.  By construction, all points in $V_{t} \setminus V_{t-1}$ have their demands met exactly, and we are left with a reduced problem residual demands $b_{t-1}$ supported on $V_{t-1}$.  As we show, eventually the demands are supported entirely on the $2^k$ corners of a hypercube forming a bounding box of the original support set, at which point the the simple scheme of distributing the remaining demand uniformly to all corners is a $k$-factor approximation.

We do not actually carry out the routing; rather, we state a crude upper-bound on its cost, and then show that bound itself exceeds the true min-cost by a factor of some $\kappa$.   For the preconditioner, we shall only require the sequence of reduced demands $b_t,b_{t-1},\ldots$, and not the flow actually routing them.  

\begin{theorem}\label{prenorm}
\[ \|b_t\|_{\opt} \leq \sum_{s = 0}^t k2^{-s} \|b_s\|_1 \leq 2k(t+1) \|b_t\|_{\opt} \]
\end{theorem}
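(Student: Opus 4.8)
The plan is to prove the two inequalities of Theorem~\ref{prenorm} separately: the left-hand bound by exhibiting a single explicit flow that realizes the routing scheme, and the right-hand bound by exhibiting, for each term of the sum, an explicit $1$-Lipschitz dual certificate tested against $b_t$.

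\emph{Left inequality.} Running the routing scheme on $b_t$ produces a flow of divergence exactly $b_t$. Let $\jmath_s$ be the flow implementing the level-$s$ reduction for $s=t,t-1,\ldots,1$, and let $\jmath_0$ be the final flow redistributing the corner demands $b_0$ uniformly over the $2^k$ corners of the bounding cube. If $M_s$ denotes the column-stochastic ``pull'' matrix, so that $b_{s-1}=M_s b_s$, then $\jmath_s=\sum_x b_s(x)\,(\text{unit flow from the distribution }M_s(\cdot\mid x)\text{ to }x)$ has divergence $b_s-M_s b_s=b_s-b_{s-1}$ for $s\ge 1$, and $\jmath_0$ has divergence $b_0-0=b_0$ since the total demand is zero; hence $\sum_{s=0}^t\jmath_s$ has divergence $b_t$ and is feasible for $b_t$. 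For $s\ge 1$ the flow $\jmath_s$ routes, for each $x$, a total of $|b_s(x)|$ units from $x$ along shortest paths of $\ell_1$-length at most $k2^{-s}$ (the distance in $V_s$ from a point to any of its nearest points of $V_{s-1}$), so its cost is at most $k2^{-s}\|b_s\|_1$; and $\jmath_0$ routes each unit along paths of length at most $k$ (the $\ell_1$-diameter of the bounding cube, which under the normalization placing the input demands in the unit cube equals $k\cdot 2^{0}$), so its cost is at most $k\|b_0\|_1$. Summing and applying the triangle inequality for cost gives $\|b_t\|_{\opt}\le\sum_{s=0}^t k2^{-s}\|b_s\|_1$.

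\emph{Right inequality.} The central observation is that the reduced demands are obtained by repeatedly applying the column-stochastic coarsening matrices, $b_s=M_{s+1}M_{s+2}\cdots M_t\,b_t$. The adjoint $M_s^{*}$ acts on a potential $\phi$ by replacing $\phi(x)$ with the average of $\phi$ over the nearest points of $V_{s-1}$ to $x$; because this rounding is performed independently coordinate by coordinate, $M_s^{*}$ is a product of one-dimensional averaging operators, and a short direct check on pairs of adjacent lattice points shows that replacing the value at an odd multiple of $2^{-s}$ by the average of the values at its two neighboring even multiples never pushes the Lipschitz ratio above $1$. Since $\ell_1$-Lipschitzness is equivalent to coordinatewise $1$-Lipschitzness, each $M_s^{*}$ maps $\ell_1$-$1$-Lipschitz functions to $\ell_1$-$1$-Lipschitz functions. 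Now fix $s\le t$. The support of $b_s$ lies in $V_s$ and is therefore $2^{-s}$-separated, so the left inequality of Lemma~\ref{sepdiam} supplies a $1$-Lipschitz potential $\psi_s$ on $V_s$ with $\psi_s\cdot b_s\ge \tfrac{2^{-s}}{2}\|b_s\|_1$ (concretely, a $1$-Lipschitz extension of $x\mapsto \tfrac{2^{-s}}{2}\sgn b_s(x)$, whose existence is the Whitney--McShane extension). Pushing it forward, $\tilde\psi_s:=M_t^{*}\cdots M_{s+1}^{*}\psi_s$ is $1$-Lipschitz on $V_t$ and $\tilde\psi_s\cdot b_t=\psi_s\cdot(M_{s+1}\cdots M_t\,b_t)=\psi_s\cdot b_s\ge \tfrac{2^{-s}}{2}\|b_s\|_1$, so $\|b_t\|_{\opt}\ge\tilde\psi_s\cdot b_t\ge\tfrac{2^{-s}}{2}\|b_s\|_1$, i.e.\ $k2^{-s}\|b_s\|_1\le 2k\|b_t\|_{\opt}$. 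Summing over $s=0,\ldots,t$ yields $\sum_{s=0}^t k2^{-s}\|b_s\|_1\le 2k(t+1)\|b_t\|_{\opt}$.

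\emph{Main obstacle.} The one genuinely technical point is the claim that the adjoint coarsening operator $M_s^{*}$ preserves $1$-Lipschitzness in the $\ell_1$ metric; this is precisely what keeps each intermediate demand $b_s$ ``no cheaper to route than'' $b_t$ up to the factor $2^{s}$, and it is the reason the analysis parallels that of lattice multigrid. Everything else --- tracking the divergences of the per-level flows, the elementary shortest-path bounds $k2^{-s}$, and invoking Lemma~\ref{sepdiam} --- is routine; the remaining bookkeeping is only to confirm that the support stays in a bounded region (so the matrices $M_s$ are effectively finite-dimensional and the maximizer defining $\|b_s\|_{\opt}$ exists).
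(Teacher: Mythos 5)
Your proposal is correct, and the two halves relate to the paper differently. The left inequality is argued exactly as in the paper: you charge each level of the routing scheme $k2^{-s}\|b_s\|_1$ via the shortest-path pulls of length at most $k2^{-s}$, telescope the divergences, and absorb the final corner-routing of $b_0$ into the $s=0$ term (the paper gets $\tfrac12 k\|b_0\|_1$ there; your cruder $k\|b_0\|_1$ fits equally well). For the right inequality your route is genuinely different in how the crucial step is proved. The paper establishes the primal monotonicity $\|b_{s-1}\|_{\opt}\le\|b_s\|_{\opt}$ (Lemma \ref{indstep}(2)) by decomposing any routing of $b_s$ into single-edge unit demands, invoking scale invariance and symmetry, and checking that the random rounding does not stretch a single lattice edge in expectation; it then applies Lemma \ref{sepdiam} termwise. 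You instead work entirely on the dual side: the adjoint coarsening operators $M_s^*$ are coordinatewise averaging operators that preserve $1$-Lipschitzness in $\ell_1$, so the explicit sign-based certificate coming from $2^{-s}$-separation pushes forward to a $1$-Lipschitz potential testing $b_t$ directly, giving $2^{-s-1}\|b_s\|_1\le\|b_t\|_{\opt}$ without ever mentioning intermediate optimal flows. The two arguments are dual statements of the same non-expansion property of the coordinatewise rounding (non-stretching of flows versus Lipschitz preservation of the averaging adjoint), and indeed the paper's commented-out ``self-embedding'' discussion is the primal abstraction of exactly this. What your version buys: it needs only weak duality ($\phi\cdot b\le\|b\|_{\opt}$ for $1$-Lipschitz $\phi$, immediate from the paper's definition of $\|\cdot\|_{\opt}$), it avoids the flow-decomposition step, and it produces explicit dual certificates of near-optimality, which fits the paper's advertised dual output. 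What the paper's primal version buys: it is phrased for arbitrary metrics and arbitrary randomized coarsenings (non-stretching self-embeddings), so it isolates the one place where the $\ell_1$/lattice structure enters, whereas your coordinatewise-averaging argument is specific to the product structure of $\ell_1$ on the lattice. The only points to be careful about in your write-up are the domain bookkeeping when factoring $M_s^*$ into one-dimensional operators on mixed coarse/fine lattices, and the (correct, but worth stating) equivalence of $\ell_1$-Lipschitzness with coordinatewise $1$-Lipschitzness used there; both check out.
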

Assuming theorem \ref{prenorm} holds, we take our preconditioner to be a matrix $\P$ such that,
\[ \|\P b_T\|_1 = \sum_{t = 0}^T k 2^{-t}\|b_t\|_1 \]

The proof of theorem \ref{prenorm} uses two essential properties of the routing scheme described.
The first is that the cost incurred in the reduction from $b_t$ to $b_{t-1}$ is not much larger than the min-cost of routing $b_t$.  The second is the reduction does not increase costs.  That is, the true min-cost for routing the reduced problem $b_{t-1}$ is at most that of routing the original demands $b_t$.  
\begin{lemma}\label{indstep}  Let $b_{t-1}$ be the reduced demands produced when the routing scheme is given $b_t$.  Then,
\begin{enumerate}
\item The cost incurred by that level of the scheme is at most $k2^{-t} \|b_t\|_1 \leq 2k \|b_t\|_{\opt}$
\item The reduction is non-increasing with respect to min-cost:
$ \|b_{t-1}\|_\opt \leq \|b_t\|_{\opt} $
\end{enumerate}
\end{lemma}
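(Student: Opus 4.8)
The plan is to analyze a single level of the routing scheme, where demands $b_t$ supported on $V_t$ are reduced to demands $b_{t-1}$ supported on $V_{t-1}$ by having each $x \in V_t$ pull $b_t(x)$ uniformly from its nearest neighbors in $V_{t-1}$ along shortest paths of length at most $k2^{-t}$. For part~(1), I would bound the cost of the flow routed at this level directly: each unit of demand at $x$ travels a distance at most $k2^{-t}$, so the total cost of this level's flow is at most $k2^{-t}\|b_t\|_1$. To convert this into the bound $2k\|b_t\|_\opt$, I would invoke Lemma~\ref{sepdiam}: the support of $b_t$ lies in $V_t$, which is $2^{-t}$-separated (the minimum spacing of the lattice), so $\|b_t\|_\opt \ge \tfrac{1}{2}\cdot 2^{-t}\|b_t\|_1$, i.e.\ $2^{-t}\|b_t\|_1 \le 2\|b_t\|_\opt$, giving $k2^{-t}\|b_t\|_1 \le 2k\|b_t\|_\opt$.

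For part~(2), the key observation is that the reduced demand $b_{t-1}$ equals $b_t$ composed with the random-rounding transition map: writing $K$ for the (column-)stochastic operator sending a point-mass at $x\in V_t$ to the uniform distribution over its nearest neighbors in $V_{t-1}$, we have $b_{t-1} = K b_t$ (restricted to $V_{t-1}$, with the demands of vertices in $V_t\setminus V_{t-1}$ now fully satisfied). Crucially $K$ moves mass a distance at most $k2^{-t}$ and, more importantly, it is the identity on $V_{t-1}$: a point already in the coarser lattice is its own nearest neighbor. I would then use the dual characterization $\|b\|_\opt = \max\{\phi\cdot b : \phi\text{ 1-Lipschitz}\}$. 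Given an optimal $1$-Lipschitz $\phi$ for $b_{t-1}$, I need to exhibit a $1$-Lipschitz $\psi$ with $\psi\cdot b_t \ge \phi\cdot b_{t-1}$; the natural choice is $\psi = K^*\phi$, the average of $\phi$ over nearest neighbors. Then $\psi\cdot b_t = \phi\cdot K b_t = \phi\cdot b_{t-1}$ exactly, so the only thing to check is that $\psi = K^*\phi$ is $1$-Lipschitz on $V_t$ — i.e.\ that averaging a $1$-Lipschitz function over nearby lattice points preserves the Lipschitz constant.

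The main obstacle is precisely this last Lipschitz-preservation step. It is not automatic that $K^*\phi$ is $1$-Lipschitz for an arbitrary shortest-path/nearest-neighbor coupling; one needs the rounding to be coordinatewise and independent across coordinates. The clean way to see it: in $\ell_1$, a function is $1$-Lipschitz iff it is $1$-Lipschitz in each coordinate separately, and the nearest-neighbor map $K$ on $V_t$ factors as a product of independent one-dimensional roundings (each non-aligned coordinate $x_i$ is rounded to one of its two neighbors in $2^{-(t-1)}\Z$, uniformly). So $K^* = \bigotimes_i K_i^*$, and it suffices to check the one-dimensional statement: if $g:2^{-t}\Z\to\R$ is $1$-Lipschitz (in the $\ell_1$/absolute-value sense, scaled by edge length) and we average over the one or two nearest points of $2^{-(t-1)}\Z$, the result is still $1$-Lipschitz. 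This reduces to a short case analysis on consecutive lattice points and their rounding targets, using convexity/averaging of the Lipschitz bound; once this one-dimensional fact is in hand, a standard hybrid argument over coordinates lifts it to $V_t$, and part~(2) follows. With Lemma~\ref{indstep} established, Theorem~\ref{prenorm} follows by summing part~(1) over levels $s=t,t-1,\dots,0$ (telescoping via part~(2)) for the lower bound on $\sum_s k2^{-s}\|b_s\|_1$, and by the base-case corner-routing estimate plus the $\le R\|b\|_1$ side of Lemma~\ref{sepdiam} for the upper bound.
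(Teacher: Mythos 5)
Your proof is correct, and part (1) matches the paper's argument exactly (path length at most $k2^{-t}$ per unit of demand, then the separation half of Lemma~\ref{sepdiam} to get $2^{-t}\|b_t\|_1\leq 2\|b_t\|_{\opt}$). For part (2), however, you take a genuinely different route. The paper argues on the primal side: by linearity of the reduction it suffices to treat a unit demand between two adjacent lattice points $x=(0,z)$, $y=(\tfrac12,z)$; since both endpoints round their shared coordinates according to the same distribution, half of the reduced demand cancels, and the residual ($\tfrac12$ unit from $(0,z')$ to $(1,z')$, coordinatewise) is routed explicitly at cost $\tfrac12=\|x-y\|_1$, so the reduced problem's min-cost does not exceed the original edge cost. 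You instead work on the dual side: writing $b_{t-1}=Kb_t$ for the (column-stochastic) rounding operator, you take an optimal $1$-Lipschitz potential $\phi$ for $b_{t-1}$ and check that $K^*\phi$ is still $1$-Lipschitz, which gives $\|b_{t-1}\|_{\opt}=\phi\cdot b_{t-1}=(K^*\phi)\cdot b_t\leq\|b_t\|_{\opt}$; the Lipschitz preservation is exactly where the structure of the scheme enters, and your reduction — coordinatewise Lipschitzness in $\ell_1$, tensor factorization of $K^*$ over independent one-dimensional roundings, and the one-line 1D check ($|g(n)-\tfrac12 g(n)-\tfrac12 g(n+1)|=\tfrac12|g(n)-g(n+1)|\leq\tfrac12$) plus a hybrid over coordinates — is sound. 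Your dual argument is equivalent in spirit to saying the random rounding is a non-stretching self-embedding (equivalently, that averaging over the rounding is a contraction on Lipschitz functions), which is more modular and generalizes beyond $\ell_1$ lattices to any rounding with the product/coupling structure; the paper's primal cancellation argument is shorter and more concrete, producing the cheap routing of the reduced demands directly. One small caveat: your direction uses strong duality for $b_{t-1}$ (that the Lipschitz maximum attains $\|b_{t-1}\|_{\opt}$), which is consistent with how the paper defines $\|\cdot\|_{\opt(\d)}$, while only weak duality is needed on the $b_t$ side. Your closing remarks on deducing Theorem~\ref{prenorm} also agree with the paper.
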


Assuming lemma \ref{indstep}, we now prove \ref{prenorm}.  We argue the sum is an upper bound on the total cost incurred by the routing scheme, when applied to $b_t$.  The left inequality follows immediately, as the true min-cost is no larger.

The total cost of the scheme is at most the cost incurred on each level, plus the cost of the final routing of $b_0$.  Lemma \ref{indstep}(a) accounts for all terms above $s = 0$.  For the final routing, the demands $b_0$ are supported in $\{0,1\}^k$, so the cost of routing all demand to a random corner is at most $\frac{1}{2}k \|b_0\|_1$. 

For the right inequality, we observe
\[ \sum_{s = 0}^t k 2^{-t} \|b_t\|_1 \leq \sum_{s = 0}^t 2k \|b_s\|_{\opt} \]
The inequality follows termwise, using lemma \ref{indstep}(a) for $s \geq 1$; the $s=0$ case follows from lemma \ref{sepdiam} because $V_0$ is $1$-separated.

\subsection{Proof of Lemma \ref{indstep}}
For the first part, each vertex $x\in V_t$ routes its demand $b_t(x)$ to the closest points in $V_{t-1}$ any shortest path.  As such paths have length at most $k2^{-t}$, each point $x$ contributes at most $k2^{-t}|b_t(x)|$.

For the second part, by scale-invariance it suffices to consider $t = 1$.  Moreover, it suffices to consider the case where $b_1$ consists of a unit demand from between two points $x,y \in V_1$ with $\|x - y\|_1 = \frac{1}{2}$.  To see this, we observe any flow routing arbitrary demands $b_1$ consists of a sum of such single-edge flows.  As the reduction is linear, if the cost of each single-edge demand-pair is not increased, then the cost their sum is not increased.
By symmetry, it suffices to consider $x,y$ with $x = (0,z)$ and $y = (\frac{1}{2},z)$.  Then, the reduction distributes $x$'s demand to a distribution over $(0,z')$ where $z' \sim Z'$; $y$'s demand is split over $(0,z'$) and $(1,z')$ where $z' \sim Z'$.  Therefore, half of the demand at $(0,z')$ is cancelled, leaving the residual problem of routing $1/2$ unit from $(0,z')$ to $(1,z')$.  That problem has cost $1/2$, by routing each fraction directly from $(0,z')$ to $(1,z')$.

\ifFOCS
\bibliographystyle{IEEEtran}
\else
\bibliographystyle{plain}
\fi
\bibliography{multi}

\appendix

\end{document}